\documentclass{article}

\usepackage[T2A]{fontenc}
\usepackage[utf8]{inputenc}
\usepackage[english]{babel}
\usepackage{geometry}
\usepackage{indentfirst}
\usepackage[linktoc=all]{hyperref}
\usepackage{tikz}
\usepackage{amsmath}
\usepackage{amssymb}
\usepackage{mathrsfs}
\usepackage{amsthm}
\usepackage{float}
\usepackage{tikz,listofitems}
\usepackage{enumitem}
\usepackage{scalerel,stackengine}
\usepackage{pict2e}
\usepackage{pdfpages}
\usepackage{natbib}
\usepackage{tabularx}

\theoremstyle{plain}
\newtheorem{theorem}{Theorem}
\newtheorem{lemma}{Lemma}

\newtheorem{problem}{Conjecture}
\newtheorem{example}{Example}

\theoremstyle{definition}
\newtheorem{definition}{Definition}

\begin{document}

\title{\LARGE \textbf{On the Construction of Recursively Differentiable Quasigroups and an Example of a Recursive $[4,2,3]_{26}$-Code}}
\author{
\large \textbf{Petr Klimov} \\
\normalsize \textit{Moscow Institute of Physics and Technology} \\ 
\normalsize  \texttt{email: peterklimov@yandex.ru}
}
\date{}

\maketitle

UDC 512.548.7+519.143+519.144

\begin{abstract}
In 1998, E. Couselo, S. González, V. T. Markov, and A. A. Nechaev introduced the notions of recursive codes and recursively differentiable quasigroups. They conjectured that recursive MDS codes of dimension $2$ and length $4$ exist over every finite alphabet of size $q \not\in \{2, 6\}$, and verified this conjecture in all cases except $q \in \{14, 18, 26, 42\}$. In 2008, V. T. Markov, A. A. Nechaev, S. S. Skazhenik, and E. O. Tveritinov resolved the case $q=42$ by providing an explicit construction. The present paper settles the outstanding case $q=26$. The construction rests upon methods for producing recursively differentiable quasigroups and recursive MDS codes via perfect cyclic Mendelsohn designs. Moreover, we sharpen several known bounds concerning the existence of recursively $n$-differentiable quasigroups of small orders.
\end{abstract}

\textbf{Keywords:} recursive codes, MDS codes, recursively differentiable quasigroups, perfect cyclic Mendelsohn designs.

\section{Introduction}

Let $\Omega = \{a_1, \ldots, a_q\}$ be a finite alphabet. A subset $K \subseteq \Omega^n$ is called a \textit{code of length $n$}, or simply an \textit{$n$-code over $\Omega$}.
It is customary to define the \textit{combinatorial dimension} of $K$ as $\log_{|\Omega|} |K|$.
Accordingly, a code of length $n$ and combinatorial dimension $k$ is referred to as an $[n,k]_\Omega$-code.
For any two words $\vec{u}, \vec{v} \in \Omega^n$, and in particular for any two codewords of $K$, the \textit{Hamming distance} $d(\vec{u}, \vec{v})$ is the number of coordinates in which they differ. The \textit{distance} of $K$, denoted $d(K)$, is the minimum Hamming distance between distinct codewords of $K$.
An $[n,k]_\Omega$-code with distance $d$ is thus called an $[n,k,d]_\Omega$-code, or equivalently, an $[n,k,d]_q$-code.
The classical \textit{Singleton bound} \cite{jmakv} asserts that for every code one has: $$d \le n - k + 1.$$
Codes attaining equality in the Singleton bound, that is, those for which $d = n - k + 1$, are known as \textit{maximum distance separable (MDS) codes}.

An important nontrivial case in the theory of MDS codes concerns the construction of $[4,2,3]_q$-codes \cite{Markov42}. Constructing such a code is equivalent to producing a pair of \textit{orthogonal Latin squares} on a set $\Omega$ of size $q$. Specifically, this means two $q \times q$ arrays in which each element of $\Omega$ appears exactly once in every row and every column, with the additional orthogonality condition that the ordered pairs of entries in corresponding cells are all distinct.
It is immediate that no such pair can exist for $q=2$. Likewise, by the resolution of Euler’s conjecture, proved in 1900 by Gaston Tarry \cite{tarry1900probleme}, no pair exists for $q=6$. In 1960, Bose, Shrikhande, and Parker established that for all other values of $q$ orthogonal Latin squares do indeed exist \cite{Bose_Shrikhande_Parker_1960}.

A \textit{left quasigroup} is a set $\Omega$ with a binary operation $*$ such that, for every $a,b \in \Omega$, the equation $$a * x = b$$ admits a unique solution $x \in \Omega$.
Analogously, a \textit{right quasigroup} is defined by the requirement that, for every $a,b \in \Omega$, the equation $$y * a = b$$ admits a unique solution $y \in \Omega$.
A \textit{quasigroup} is a groupoid that is simultaneously a left and a right quasigroup. Equivalently, the Cayley table of a finite groupoid on $\Omega$ forms a Latin square precisely when the groupoid is a quasigroup.
A quasigroup is said to be \textit{idempotent} if $a*a=a$ for all $a \in \Omega$.

In \cite{MarkovBase}, the notions of recursively $n$-differentiable quasigroups and full recursive codes were introduced. Given a quasigroup $(\Omega, *)$, one defines a recursive sequence of binary operations as follows:

$$a *_{-2} b = a,$$
$$a *_{-1} b = b,$$
$$\ldots$$
$$a *_n b = (a *_{n - 2} b) * (a *_{n - 1} b).$$

For $n \geq 0$, the groupoid $(\Omega, *_n)$ is called the \textit{$n$-th recursive derivative} of $(\Omega, *)$.
By definition, every quasigroup coincides with its $0$-th recursive derivative, while $(\Omega, *_1)$ is referred to simply as the \textit{recursive derivative} of $(\Omega, *)$.
A quasigroup $(\Omega, *)$ is said to be \textit{recursively $n$-differentiable} if $(\Omega, *_k)$ is a quasigroup for every $0 \le k \le n$.
In particular, every quasigroup is recursively $0$-differentiable, and a recursively $1$-differentiable quasigroup is usually called simply \textit{recursively differentiable}.
It is immediate from the definition that recursive $n$-differentiability implies recursive $k$-differentiability for $0 \le k < n$.

A code of length $n$ is said to be a \textit{full $k$-recursive code} if there exists a function $f: \Omega^k \to \Omega$ such that the code comprises all words $\vec{u} = (u_1, \ldots, u_n)$ satisfying $u_1, \ldots, u_k \in \Omega$ and, for $l > k$,
$$u_l = f(u_{l - k}, u_{l - k + 1}, \ldots u_{l-1}).$$
The combinatorial dimension of a full $k$-recursive code is precisely $k$.
The maximal length of full $k$-recursive MDS codes over an alphabet of size $q$ is denoted by $v^r(k,q)$; this quantity has been investigated extensively in \cite{MarkovBase, MarkovAdd}.
Of particular interest is the case $k = 2$, corresponding to the determination or estimation of $v^r(2,q)$.
In this case, the defining function $f$ reduces to a binary operation on $\Omega$.
Within this framework, the following theorem establishes the connection between recursive codes and recursively differentiable quasigroups.

\begin{theorem}\label{eq}\cite[Theorem 4]{MarkovBase}
     A full $2$-recursive code of length $n \ge 3$, specified by a function $f$, is MDS if and only if the corresponding groupoid $(\Omega, f)$ is a recursively $(n -3)$-differentiable quasigroup.
\end{theorem}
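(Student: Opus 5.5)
Write the codeword generated by the initial pair $(a,b)$ explicitly. Setting $u_1=a$, $u_2=b$ and $u_l = u_{l-2} * u_{l-1}$ for $l\ge 3$, induction on $l$ gives
$$u_l = a *_{l-3} b, \qquad 1 \le l \le n.$$
For $l=1,2$ this is the convention $a*_{-2}b=a$, $a*_{-1}b=b$. The inductive step is exactly the defining recursion $a*_k b=(a*_{k-2}b)*(a*_{k-1}b)$. So the code is
$$K=\{(a*_{-2}b,\ a*_{-1}b,\ a*_0 b,\ \ldots,\ a*_{n-3}b): a,b\in\Omega\},$$
and $|K|=q^2$, because the first two coordinates already determine the word. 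Here $k=2$, so the Singleton bound says $K$ is MDS iff $d(K)=n-1$. Equivalently, two distinct codewords may agree in at most one coordinate. Since $|K|=q^2$, this is the same as requiring that for every pair of positions $i<j$ the projection $K\to\Omega^2$ onto $(u_i,u_j)$ is injective, and hence bijective.

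The plan is to show that bijectivity for all pairs $(i,j)$ is equivalent to each $*_k$, $0\le k\le n-3$, being a quasigroup. Consider first pairs of the form $(1,j)$ and $(2,j)$. The map $(a,b)\mapsto(a,\,a*_{j-3}b)$ is bijective iff every left translation of $*_{j-3}$ is a bijection. The map $(a,b)\mapsto(b,\,a*_{j-3}b)$ is bijective iff every right translation of $*_{j-3}$ is a bijection. So these pairs alone force each $*_k$ with $0\le k\le n-3$ to be a quasigroup, which proves necessity.

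For sufficiency, use shift invariance of the code. Because the recursion is the same at every step, the tail $(u_i,u_{i+1},\ldots)$ of a codeword is the codeword generated from the initial pair $(u_i,u_{i+1})$. Assume every $*_k$ is a quasigroup, so in particular $*=*_0$ is. Then the map from $(u_1,u_2)$ to $(u_i,u_{i+1})$ is a bijection of $\Omega^2$, since each single step $(x,y)\mapsto(y,x*y)$ is a bijection when $*$ is a left quasigroup.

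Now take a general pair $i<j$. Re-parametrize the code by $(c,d)=(u_i,u_{i+1})$; then $u_j=c*_{j-i-2}d$. Projecting onto $(u_i,u_j)$ gives the map $(c,d)\mapsto(c,\,c*_{j-i-2}d)$. Since $j-i-2\le n-3$, this map is bijective by left-invertibility of $*_{j-i-2}$. (The case $j=i+1$ is the trivial $*_{-1}$ case.) Therefore all projections are bijective and the code is MDS.

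The main obstacle is the bookkeeping of index ranges, ensuring that exactly $*_0,\dots,*_{n-3}$ are needed and no more. There is also an apparent asymmetry: pairs not starting at coordinates $1,2$ seem to use only left-invertibility. It resolves because right-invertibility of every $*_k$ is already forced by the pairs $(2,j)$, and the shift argument covers the remaining pairs using only left translations.
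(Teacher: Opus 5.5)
The paper gives no proof of this statement. It is quoted as Theorem~4 of Couselo, Gonz\'alez, Markov and Nechaev and used as a black box, so there is no in-paper argument to compare yours with. Judged on its own, your proof is correct and complete. Each step holds:
\begin{itemize}
\item the identification $u_l=a*_{l-3}b$;
\item the reduction of the MDS property, given $|K|=q^2$, to bijectivity of every two-coordinate projection;
\item necessity, read off from the pairs $(1,j)$ and $(2,j)$;
\item sufficiency, by re-parametrizing through $(u_i,u_{i+1})$ and using $0\le j-i-2\le n-3$.
\end{itemize}

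There is one slip to fix. The step $(x,y)\mapsto(y,x*y)$ is a bijection of $\Omega^2$ exactly when every right translation $x\mapsto x*y$ is bijective. In the paper's convention that means $*$ must be a \emph{right} quasigroup, not a left one. A left quasigroup does not suffice: for $x*y=y$ the step sends everything to the diagonal.

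Since $*=*_0$ is a full quasigroup under your hypothesis, the conclusion is unaffected. Your closing remark about the asymmetry should be restated, though. Sufficiency uses right-invertibility of $*_0$ (through the shift) together with left-invertibility of $*_0,\dots,*_{n-3}$. The appeal to the pairs $(2,j)$ belongs to the necessity direction and plays no role here.

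Stated correctly, your argument proves slightly more than the theorem. If $*$ is a quasigroup and $*_1,\dots,*_{n-3}$ are merely left quasigroups, the code is already MDS, so right-invertibility of the higher derivatives comes for free. This matches the identity $a*_k b=b*_{k-1}(a*b)$. That identity shows directly that $a\mapsto a*_k b$ is bijective whenever $*$ is a right quasigroup and $*_{k-1}$ is a left quasigroup.
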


The most extensively examined nontrivial instance concerns determining the values of $q$ for which $v^r(2,q) \ge 4$. Equivalently, this amounts to establishing the existence of a recursive $[4,2,3]_q$-code. As noted above, such codes exist only for all $q \not\in \{2,6\}$. In \cite{MarkovBase}, the following conjecture was proposed.

\begin{problem}[Couselo-González-Markov-Nechaev]\label{hyp}
For every $q \not\in \{2, 6\}$ there exists a full recursive $[4, 2, 3]_q$-code. 
\end{problem}

By Theorem \ref{eq}, this conjecture is equivalent to the existence of a recursively differentiable quasigroup of order $q$.
Several constructions of such quasigroups are known, including those based on special transversals \cite{MarkovBase}, pseudogeometries \cite{Markov42}, linear recursive sequences \cite{Abashin}, and certain extension methods \cite{syrbu2023recursive}.

In \cite{MarkovBase}, E. Couselo, S. González, V.T. Markov, and A.A. Nechaev verified Conjecture \ref{hyp} for all $q$ with the exception of $q \in \{14, 18, 26, 42\}$. Subsequently, in \cite{Markov42}, Markov, Nechaev, Skazhenik, and Tveritinov established it for $q = 42$, providing the explicit construction.

In the present work, we establish Conjecture \ref{hyp} for $q = 26$ and provide an explicit construction of a recursively differentiable quasigroup of order $26$. The quasigroup obtained in our construction is, in fact, recursively $2$-differentiable. To this end, we develop a general methodology for constructing recursively $n$-differentiable quasigroups, and, consequently, recursive MDS codes of dimension $2$, via certain combinatorial designs. Furthermore, we refine the previously best-known bounds for $v^r(2, q)$ mentioned in \cite{MarkovBase, MarkovAdd}.

\section{Cyclic Mendelsohn Designs}

In the 1970s, Nathan Mendelsohn introduced a special class of cyclic block designs \cite{MENDELSOHN197763}. We briefly recall the basic definitions.

\begin{definition}
     A set $\{a_1, a_2,\ldots, a_k\}$ is said to be \textit{cyclically ordered} if it is equipped with the cyclic order $a_1 < a_2 < \ldots < a_k < a_1$.
\end{definition}

The cyclic order $a_1 < a_2 < \ldots < a_k < a_1$ on a cyclically ordered set $\{a_1, a_2, \ldots, a_k\}$ can be represented explicitly by writing the set as a sequence $(a_1, a_2, \ldots, a_k)$. In what follows we will write cyclically ordered sets in exactly this form.

\begin{definition}
   In a cyclically ordered set $(a_1, a_2,$ $ \ldots, a_k)$, the pair $a_i, a_{i + 1}$ is said to be \textit{consecutive}, with indices taken modulo $k$.
\end{definition}

\begin{definition}
    In a cyclically ordered set $(a_1, a_2,$ $ \ldots, a_k)$, the pair $a_i$, $a_{(i + t) }$ is said to be \textit{$t$-apart}, with indices taken modulo $k$.
\end{definition}

Therefore, a consecutive pair is precisely a $1$-apart pair.

\begin{definition}
    A \textit{cyclic Mendelsohn design} of type $(v, k, \lambda)$, or simply a $(v, k, \lambda)$-Mendelsohn design, is a pair $(X,B)$ where $X$ is a set of cardinality $v$ and $B$ is a collection of cyclically ordered subsets of $X$ with cardinality $k$ such that every ordered pair of distinct elements of $X$ appears consecutively in exactly $\lambda$ blocks.
\end{definition}

In a $(v, k, \lambda)$-Mendelsohn design $(X, B)$, the elements of $X$ are referred to as \textit{points}, while the elements of $B$ are referred to as \textit{blocks}.

\begin{definition}
    A cyclic Mendelsohn design $(X, B)$ of type $(v, k, \lambda)$ is said to be \textit{$l$-perfect} if, for every $t = 1, \dots, l$, each ordered pair of distinct points of $X$ is $t$-apart in exactly $\lambda$ blocks of $B$.
\end{definition}

Evidently, every $(v, k, \lambda)$-Mendelsohn design is $1$-perfect. 

\begin{definition}
    A cyclic Mendelsohn design $(X, B)$ of type $(v, k, \lambda)$ is said to be \textit{perfect} if it is $(k - 1)$-perfect.
\end{definition}

Cyclic Mendelsohn designs of type $(v, k, \lambda)$ are denoted by $(v, k,$ $ \lambda)$\textit{-MD}, whereas perfect Mendelsohn designs by $(v, k, \lambda)$-PMD.

\begin{example}
    Consider the set $X = \{0, 1, 2\}$ and the collection of blocks $B = \{(0, 1, 2), (0, 2, 1)\}$. One readily verifies that each ordered pair of distinct elements of $X$ is $t$-apart in exactly one of the blocks, for $t = 1, 2$. Hence $(X, B)$ is a $(3, 3, 1)$-PMD.
\end{example}

It is straightforward to verify that any $(v, k, \lambda)$-MD has exactly $\frac{\lambda v(v-1)}{k}$ blocks. Hence a necessary condition for the existence of a $(v, k, \lambda)$-PMD is $\lambda v(v -1) \bmod{k} \equiv 0$. While this condition is frequently sufficient, it does not always guarantee the existence \cite{BennettMD}. For instance, a $(6, 3, 1)$-PMD does not exist \cite{mendelsohn1971natural}.

Now let $k \ge 3$ and suppose $(X, B)$ is a $(v, k, 1)$-MD. Following \cite{lindner2003quasigroups}, we define the \textit{directed standard construction} of a groupoid on $X$ from $(X, B)$ by introducing a binary operation as follows:

\begin{enumerate}
    \item $a * a = a$ for every $a \in X,$
     \item $a * b = c$ for distinct $a, b \in X$ whenever there exists a block of $B$ of the form $(\ldots, a, b, c, \ldots).$
\end{enumerate}

The construction indeed yields a well-defined groupoid: for any pair of identical elements, the operation $a * a$ is uniquely determined. For distinct elements $a, b \in X$, the defining property of a Mendelsohn design ensures that there exists a unique block in which $a$ and $b$ appear consecutively, thereby uniquely specifying $a * b$.

\section{Construction of Recursively Differentiable Quasigroups via Combinatorial Designs}

We begin by establishing a periodicity property of the sequences under consideration.

\begin{theorem}
    Let $(X, *)$ be a finite right quasigroup. Then any sequence of the form $$a, b, a * b, b * (a * b), \ldots$$
    ---where each subsequent term, starting from the third, is the product of the preceding two---is periodic.

\end{theorem}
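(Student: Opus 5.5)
The plan is to treat the sequence as the orbit of a map on the finite set of ordered pairs, and to show that this map is a bijection, so its orbits are purely periodic.

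Let the sequence be $x_0 = a$, $x_1 = b$, and $x_{n+2} = x_n * x_{n+1}$ for $n \ge 0$. Consider the map
$$T: X \times X \to X \times X, \qquad T(u,v) = (v,\, u * v).$$
Then $(x_{n+1}, x_{n+2}) = T(x_n, x_{n+1})$, so the pairs $p_n = (x_n, x_{n+1})$ form the orbit $p_n = T^n(a,b)$. The sequence $(x_n)$ is periodic exactly when the orbit $(p_n)$ is periodic, since $x_n$ is the first coordinate of $p_n$. So it suffices to show that $T$ is a permutation of the finite set $X \times X$.

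Injectivity of $T$ is the key step, and it is where the right quasigroup hypothesis enters. Suppose $T(u,v) = T(u',v')$. Comparing first coordinates gives $v = v'$. Comparing second coordinates gives $u * v = u' * v$. Since $y * v = c$ has a unique solution $y$ for each $c$, it follows that $u = u'$. Because $X \times X$ is finite, an injective self-map of it is a bijection.

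A bijection $T$ of a finite set has finite order: the powers $T^0, T^1, \ldots$ lie in the finite group $\mathrm{Sym}(X \times X)$, so $T^N = \mathrm{id}$ for some $N \ge 1$. Then $p_{n+N} = T^{n+N}(a,b) = T^n(a,b) = p_n$ for all $n \ge 0$, and hence $x_{n+N} = x_n$ for all $n$. This shows the sequence is purely periodic from its first term, not merely eventually periodic.

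There is no real obstacle. The one point to get right is that finiteness alone only yields eventual periodicity through the pigeonhole principle. Pure periodicity requires the invertibility of $T$, which comes from right cancellation, that is, unique solvability of $y * v = c$. Left solvability is not needed.
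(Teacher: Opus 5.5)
Your proof is correct and uses the same key fact as the paper's: right cancellation gives each consecutive pair a unique predecessor pair. The paper exploits this by showing the first repeated pair must be $(a,b)$, while you encode it as injectivity of $T(u,v) = (v, u*v)$ on the finite set $X \times X$. Framing $T$ as a permutation is somewhat cleaner. It avoids the paper's case split ($i = f$ versus $i \neq f$) and its informal handling of overlapping or adjacent repeats.
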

\begin{proof}
    Since $(X, *)$ is finite, there are only finitely many ordered pairs of elements; therefore, some consecutive pair must eventually repeat.
    Let $c, d$ denote the first such pair that appears twice in the sequence, assuming that two appearances are disjoint and separated by some distance; it is straightforward to verify that the argument extends naturally to the cases where the appearances are overlapping or adjacent. Suppose further that $(c, d) \neq (a, b)$. 
    
    Write the first two appearances of $c, d$ schematically as

    $$a, b, \dots, i, c, d, \dots, f, c,d, \dots$$

    If $i = f$, a contradiction arises, since then the pair $i, c$ would appear twice before $c, d$, violating the assumption that $c, d$ is the first repeated pair. Conversely, if $i \neq f$, we again obtain a contradiction, as in a right quasigroup the equation $y * c = d$ admits a unique solution $y$ for the given pair $c, d$.
    Hence, the first repeated pair must be the initial pair $a,b$. Writing out the first and second appearances of $a,b$ yields

    $$a, b, a * b, \ldots, a, b, a * b, \ldots$$

    and, since each subsequent element is uniquely determined by the previous two, the sequence is periodic with period equal to the distance between the two appearances of $a,b$.

\end{proof}

Therefore, any sequence of the form $a, b, a * b, b * (a * b), \ldots$ in a right quasigroup can be represented as a finite cycle $(a, b, \ldots)$. On such a cycle, one may impose a cyclic order $a < b < \ldots < a$, analogous to the order in a cyclically ordered set. It should be noted, however, that a cycle is not necessarily a cyclically ordered set, as elements may repeat. 

\begin{definition}
    A finite sequence $(a_1, a_2, \ldots, a_n)$ is said to be \textit{cyclically ordered} if it is endowed with the cyclic order $a_1 < a_2 < \ldots < a_n < a_1$.
\end{definition}

\begin{example}
    The cyclically ordered sequence $(2, 2, 1)$ is not a cyclically ordered set, since it contains repeated elements, which does not prevent us from considering the order among its elements.
\end{example}

Henceforth, we regard cycles of periodic sequences as cyclically ordered finite sequences, identifying those that differ only by a cyclic shift. For cycles, the notions of consecutiveness and $t$-apartness are defined in the same fashion as for cyclically ordered sets.

\begin{definition}
   A pair $a, b \in X$ is said to be \textit{consecutive in a cycle} $C$ of a periodic sequence if $C$ can be expressed as $C = (\ldots, a, b, \ldots)$ after some cyclic shift. We also consider the pair $a, a$ is consecutive in the cycle $(a)$ of length one.
   
\end{definition}

\begin{definition}
  The set of all cycles of sequences of the form 
  $$a, b, a * b, b * (a * b), \ldots$$
  in a right quasigroup $(X,*)$ is called the \textit{cyclic decomposition} of $(X,*)$ and is denoted by $Cycle(X, *)$
\end{definition}

\begin{lemma}\label{a}
    Let $(X, *)$ denote the groupoid obtained with the directed standard construction from a $(v, k, 1)$-MD with point set $X$ and block set $B$. Then $(X, *)$ is a right quasigroup. 
\end{lemma}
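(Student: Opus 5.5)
We must show that for all $a,c \in X$ the equation $y * a = c$ has exactly one solution $y$. Since $X$ is finite, it suffices to show existence for every pair $(a,c)$. Then, for each fixed $a$, the map $y \mapsto y * a$ is a surjection $X \to X$, hence a bijection, which gives uniqueness.

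First, the case $c = a$: we take $y = a$, because $a * a = a$ by idempotency. We also check that no $y \neq a$ gives $y * a = a$. If $y \neq a$, then $y * a$ is the successor of $a$ in the block where $y,a$ appear consecutively. Since $k \ge 3$ and block elements are distinct, that successor differs from $a$. (Strictly speaking this check is not needed once the surjectivity argument is in place, but it makes the case split clean.)

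Now let $c \neq a$. By the defining property of a $(v,k,1)$-MD, the ordered pair $(a,c)$ appears consecutively in exactly one block, which we write as $(\ldots, y, a, c, \ldots)$. Here $y$ is the predecessor of $a$ in that block. Because $k \ge 3$, $y$ is distinct from both $a$ and $c$. So $y \neq a$, and the block contains the consecutive triple $y, a, c$, which by the construction gives $y * a = c$. Hence every $c$ lies in the image of $y \mapsto y*a$, and the finiteness argument above finishes the proof.

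The only delicate point is the use of $k \ge 3$: the predecessor of $a$ must be different from $c$, so that $y,a,c$ is a genuine consecutive triple rather than the wraparound of a $2$-block. Without finiteness, one could instead argue uniqueness directly. If $y * a = c = y' * a$ with $y, y' \neq a$, then $(a,c)$ is consecutive in the blocks of both $(y,a)$ and $(y',a)$. By $\lambda = 1$ these two blocks coincide, and within that block $a$ has a unique predecessor, so $y = y'$.
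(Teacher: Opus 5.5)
Your proof is correct and follows the paper's own argument. You use the same case split: $y=a$ handles $y*a=a$ (by idempotency and distinctness of points within a block), and for $c\neq a$ you take $y$ to be the predecessor of $a$ in the unique block where $(a,c)$ is consecutive; the only addition is that you make uniqueness explicit, via surjectivity of $y\mapsto y*a$ on the finite set $X$ or via the direct $\lambda=1$ argument, where the paper simply asserts it.
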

\begin{proof}
    For each $a \in X$, the equation $y *a = a$ has the unique solution $y = a$, since there is no block of form $(\ldots, a, a, \ldots
    )$, all elements in cyclically ordered block are distinct.

Consider a pair of distinct elements $a, b \in X$. There exists a unique block $C \in B$ in which they appear consecutively, say $C = (\ldots, c ,a ,b \ldots)$.  It follows that the equation $y * a = b$ admits the unique solution $y = c$.
\end{proof}

We now present a theorem establishing the connection between the cyclic decomposition of a right quasigroup and a cyclic Mendelsohn design.

\begin{theorem}\label{b}
    Let $(X, *)$ be a right quasigroup obtained with the directed standard construction from a $(v, k, 1)$-MD with point set $X$ and block set $B$. With regard to the blocks viewed as cyclically ordered sequences, we have $Cycle(X, *) / \{(a)\ |\ a \in X\} = B$, i.e., the set $B$ coincides with the set of cycles of $Cycle(X, *)$ once all cycles of length $1$ are omitted.
\end{theorem}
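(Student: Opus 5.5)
The plan is to follow, for each starting pair, the recursive sequence and show that its cycle is either a fixed point $(a)$ or a block of $B$ traversed in its cyclic order. The periodicity theorem and Lemma \ref{a} guarantee that every sequence $a, b, a*b, \ldots$ is purely periodic, so each cycle of $Cycle(X,*)$ is well defined. I will split the starting pairs $(a,b)$ into the cases $a=b$ and $a\neq b$.

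\emph{Case $a=b$.} Idempotency gives $a*a=a$, so the sequence is constant. Its cycle is $(a)$, of length one. These are exactly the cycles removed in the statement.

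\emph{Case $a\neq b$.} By the defining property of a $(v,k,1)$-MD, there is a unique block $C=(c_1,\ldots,c_k)\in B$ with $a=c_i$ and $b=c_{i+1}$ (indices mod $k$). I then show by induction that the $m$-th term of the sequence equals $c_{i+m-1}$.
\begin{itemize}
\item Suppose the two preceding terms are $c_j, c_{j+1}$. They are distinct, since the elements of a block are distinct.
\item They appear consecutively in $C$, and $C$ is the unique block with this property.
\item Since $k\ge 3$, $C$ has the form $(\ldots,c_j,c_{j+1},c_{j+2},\ldots)$, so the directed standard construction gives $c_j*c_{j+1}=c_{j+2}$.
\end{itemize}
Hence the sequence runs through $c_i,c_{i+1},\ldots$ and returns to the pair $(c_i,c_{i+1})$ after exactly $k$ steps, not earlier, because the $c$'s are distinct. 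So its cycle, up to cyclic shift, is exactly $C$. This proves that every cycle of length greater than one lies in $B$. Conversely, every block $C\in B$ arises in this way: start the sequence from any consecutive pair of $C$. Therefore $Cycle(X,*)$ with the length-one cycles removed equals $B$.

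The main point needing care is that the identification is between cyclically ordered sequences taken up to shift. One check is that a cycle coming from a pair $a\neq b$ never has length one or contains a repeated pair prematurely; this follows from the distinctness of the points within a block. The other check is that distinct blocks give distinct cycles, and that different starting pairs within one block give the same cycle. This holds because each ordered pair lies in a unique block, so $B$ and the set of nontrivial cycles correspond bijectively.
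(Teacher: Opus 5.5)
Your proof is correct and follows essentially the same route as the paper: for a distinct starting pair, you trace the sequence through the unique block containing that pair consecutively, apply the directed standard construction at each step, and conversely recover each block from any of its consecutive pairs. Your explicit handling of the $a=b$ case via idempotency and your check that the period is exactly $k$ make the converse direction slightly cleaner than the paper's, but the argument is the same.
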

\begin{proof}
    Consider an arbitrary pair of distinct points $a, b \in X$. There exists a unique block $C \in B$ in which they appear consecutively, which can be written as
    $$C = (\ldots, a, b, c, \ldots). $$

    Since $C$ is cyclically ordered, we may relabel it as 
    $$C = (a, b, c, \ldots).$$

    By the definition of the directed standard construction, this yields $a * b = c $. To compute $b* c = b * (a * b)$, we again consult $C$ and take the element immediately following $b$ and $c$. Iterating this procedure produces 
    $$C = (a, b, a * b, b * (a * b), \ldots),$$
    with the procedure continuing until it cyclically returns to the initial pair $a, b$, after which it repeats. It follows that $C$ coincides precisely with the cycle of $Cycle(X,*)$ corresponding to the periodic sequence
    $$a, b, a * b,b * (a * b), \ldots$$

    Conversely, consider a cycle $C \in Cycle(X, *)$ of length greater than one that contains a pair of distinct elements $a, b$, that is $$C = (\ldots, a, b, \ldots).$$ Repeating the reasoning above, the cycle $C$ coincides with a unique block in $B$ when regarded as a cyclically ordered sequence.
\end{proof}

Because the directed standard construction uniquely determines a groupoid from a Mendelsohn design, Lemma \ref{a} and Theorem \ref{b} imply that the various $Cycle(X, *)$ provide natural extensions of the underlying $(v, k, 1)$-MD. Notably, whereas idempotence is not guaranteed for an arbitrary right quasigroup, the directed standard construction always yields an idempotent groupoid. Moreover, although blocks of a Mendelsohn design contain no repeated elements, cycles emerging from the cyclic decomposition of a right quasigroup may contain repetitions.

\begin{theorem}\label{c}
    Let $(X, *)$ be a right quasigroup obtained with the directed standard construction from a $(v, k, 1)$-MD with point set $X$ and block set $B$. Suppose that for some integer $t$ with $1 < t < k$, every pair of distinct points $a, b \in X$ appears $t$-apart in exactly one block of $B$. Then the $(t-2)$-th recursive derivative $(X,*_{t-2})$ is a left quasigroup, while the $(t-1)$-th recursive derivative $(X, *_{t - 1})$ is a right quasigroup.
\end{theorem}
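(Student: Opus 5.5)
The key observation is that along any cycle of the cyclic decomposition the recursive derivatives simply read off terms of the sequence. For $a,b\in X$ consider the sequence $u_0=a$, $u_1=b$, $u_{j+2}=u_j*u_{j+1}$. By induction on $n$, directly from the definition $a*_n b=(a*_{n-2}b)*(a*_{n-1}b)$ with $a*_{-2}b=a$ and $a*_{-1}b=b$, we get
$$a*_n b = u_{n+2}\qquad (n\ge -2).$$
If $a\ne b$, Theorem \ref{b} says the sequence $u_0,u_1,\ldots$ traverses the unique block $C\in B$ of the form $(a,b,\ldots)$, so $u_j$ is the element $j$ positions after $a$ in $C$. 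If $a=b$, idempotence gives $a*_n a=a$ for all $n$. Thus $a*_{t-2}b$ is the element $t$-apart from $a$, and $a*_{t-1}b$ is the element $t$-apart from $b$ (equivalently $(t+1)$-apart from $a$), in the block where $a,b$ are consecutive.

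\emph{Left quasigroup property of $*_{t-2}$.} Fix $a$ and $c$ and solve $a*_{t-2}x=c$. If $c=a$, then $x=a$ is a solution. No $x\ne a$ works, since $1<t<k$ means the element $t$ positions after $a$ in a block of size $k$ lies at a different position and hence differs from $a$. If $c\ne a$, the hypothesis gives a unique block in which $a,c$ are $t$-apart, i.e.\ $c$ lies $t$ positions after $a$. The element $x$ immediately following $a$ in that block satisfies $a*_{t-2}x=c$, and $x\ne a$ since $k\ge 2$. For uniqueness, any solution $x$ has $x\ne a$ (the case $x=a$ gives $a\ne c$), and its block $(a,x,\ldots)$ places $c$ at $t$-apart from $a$. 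By uniqueness that block is the given one, so $x$ is determined.

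\emph{Right quasigroup property of $*_{t-1}$.} Fix $b$ and $c$ and solve $y*_{t-1}b=c$. The case $c=b$ forces $y=b$ exactly as above, using that $t$-apart elements are distinct because $t<k$. If $c\ne b$, take the unique block in which $b,c$ are $t$-apart, with $c$ lying $t$ positions after $b$. The solution is $y$, the element immediately preceding $b$ in that block. Uniqueness follows by the same argument: any $y\ne b$ with $y*_{t-1}b=c$ determines the block containing the consecutive pair $y,b$, and in it $c$ is $t$-apart after $b$, so it is the unique such block.

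The only delicate step is the first one. We must justify that iterating $*$ from the pair $(a,b)$ really walks along the block positions, including wrap-around when $t+1\ge k$. Theorem \ref{b} handles this, since the cycle equals the block as a cyclically ordered sequence and indices are taken mod $k$. We must also check that the diagonal cases never collide with off-diagonal solutions, which is exactly where the bounds $1<t<k$ are used.
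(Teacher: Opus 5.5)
Your proof is correct and follows essentially the same route as the paper. You identify $a*_n b$ with the term $n+2$ places after $a$ in the block containing the consecutive pair $(a,b)$ (via Theorem \ref{b}), and then read off the solution of $a*_{t-2}x=c$ (resp.\ $y*_{t-1}b=c$) as the successor of $a$ (resp.\ predecessor of $b$) in the unique block where the relevant ordered pair is $t$-apart; your uniqueness step is in fact slightly more careful than the paper's, since you explicitly use the $t$-apart hypothesis to pin down the block and separate the diagonal cases $c=a$ (resp.\ $c=b$) from the off-diagonal ones.
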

\begin{proof}

By Theorem \ref{b}, for any pair of distinct points $a, b \in X$, there exists a unique block $C \in B$ of the form $C = (\ldots, a, b, \ldots)$, which can be written as $$C = (a, b, a * b, b * (a * b), \ldots).$$ Consequently, for each integer $d$ with $0 < d < k$, the block $C$ can be expressed in the form $$C = (a, b, \ldots, a *_d b, \ldots),$$ where precisely $d$ elements separate $a$ and $b$ from $a *_d b $ within the cycle $C$.

Let us consider an arbitrary pair of distinct points $a, b \in X$ and denote by $C$ the unique block in which they are $t$-apart, so that $$C = (\ldots, a, \ldots, b, \ldots).$$ By definition, exactly $t-1$ elements separate $a$ and $b$ within the cycle $C$.

We may represent $$C = (\ldots, y', a, x', \ldots, b, \ldots),$$ explicitly displaying the elements immediately preceding and following $a$. Observe that it may occur that $y' = b$, which does not affect the argument. Consequently, there are precisely $t-2$ elements between $x'$ and $b$ within the cycle $C$, and, as noted above, exactly $t-1$ elements between $a$ and $b$. Hence, we obtain:

\begin{itemize}
    \item $b = y' *_{t - 1} a$, and $y'$ is the unique solution to the equation $b = y *_{t - 1} a$.
    \item  $b = a *_{t - 2} x'$, and $x'$ is the unique solution to the equation $b = a *_{t - 2} x$.
\end{itemize}

The uniqueness of the solutions to the equations $b = y *_{t - 1} a$ and $b = a *_{t - 2} x$ for distinct $a, b \in X$ follows directly from the defining property of a Mendelsohn design, according to which the pairs $a, x$ and $y, a$ appear in a unique block $C$. Furthermore, the equations $a = y *_{t - 1} a$ and $b = a *_{t - 2} x$ admit the unique solution $x = y = a$, since there is no block of form $(\ldots, a, \ldots, a)$; in a cyclically ordered set all elements are distinct.

It thus follows that $(X, *_{t - 2})$ is a left quasigroup, whereas $(X, *_{t - 1})$ is a right quasigroup.
\end{proof}

The preceding theorem enables us to establish the main result, offering a systematic method for constructing recursively differentiable quasigroups from Mendelsohn combinatorial designs.
\begin{theorem}\label{d}
    Applying the directed standard construction to an $(n+2)$-perfect $(v, k, 1)$-Mendelsohn design, with $0 \le n \le k - 3$, produces a recursively $n$-differentiable quasigroup.
\end{theorem}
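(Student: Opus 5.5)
We apply Theorem \ref{c} once for each value of $t$ allowed by the perfectness assumption, and then combine the one-sided conclusions. We must show that $(X,*_m)$ is a quasigroup for every $0 \le m \le n$, that is, both a left and a right quasigroup.

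The hypothesis says the design is $(n+2)$-perfect, so for every $t$ with $1 \le t \le n+2$ each ordered pair of distinct points is $t$-apart in exactly one block. Since $n \le k-3$, we have $n+2 \le k-1 < k$, so every $t$ with $2 \le t \le n+2$ satisfies the hypothesis $1 < t < k$ of Theorem \ref{c}. For each such $t$, Theorem \ref{c} gives two facts:
\begin{itemize}
\item $(X,*_{t-2})$ is a left quasigroup;
\item $(X,*_{t-1})$ is a right quasigroup.
\end{itemize}

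Now fix $m$ with $0 \le m \le n$.
\begin{itemize}
\item \emph{Left quasigroup.} Take $t = m+2$. Then $2 \le t \le n+2$, and Theorem \ref{c} says $(X,*_m)$ is a left quasigroup.
\item \emph{Right quasigroup, $m \ge 1$.} Take $t = m+1$. Then $2 \le t \le n+1$, and Theorem \ref{c} says $(X,*_m)$ is a right quasigroup.
\item \emph{Right quasigroup, $m = 0$.} Here $*_0 = *$, which is a right quasigroup by Lemma \ref{a}. Equivalently, this is the $t=1$ case, which Theorem \ref{c} excludes but Lemma \ref{a} supplies.
\end{itemize}
Hence $(X,*_m)$ is a quasigroup for every $0 \le m \le n$, so $(X,*)$ is recursively $n$-differentiable.

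The main point to check is the index bookkeeping at the two ends. At the top, the largest index needed is $t = n+2$, and it must satisfy $t < k$; this is exactly where $n \le k-3$ is used. At the bottom, $m = 0$ needs $t = 1$ for right-solvability, and that case comes from Lemma \ref{a} rather than Theorem \ref{c}.

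One should also confirm that the case $a=b$ causes no trouble, i.e. that the idempotent solutions handled inside Theorem \ref{c} cover every $m$ in range. Idempotence propagates: $a *_m a = a$ for all $m$, by induction from $a*a=a$. Moreover no block contains a repeated point, so $x=a$ is the only solution of $a *_m x = a$ and $y=a$ is the only solution of $y *_m a = a$. Theorem \ref{c} already covers this, so no extra work is needed beyond citing it.
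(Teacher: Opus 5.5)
Your proposal is correct and takes essentially the same route as the paper: it applies Theorem~\ref{c} for each $t=2,\dots,n+2$ to get left-solvability of $*_{t-2}$ and right-solvability of $*_{t-1}$, with right-solvability of $*_0=*$ coming from Lemma~\ref{a}. Your explicit check that $n+2\le k-1$, so that every such $t$ meets the hypothesis $1<t<k$ of Theorem~\ref{c}, is a detail the paper's proof leaves implicit.
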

\begin{proof}

    Let $(X, B)$ be a $(v, k, 1)$-Mendelsohn design as in the theorem, and let $(X, *)$ denote the right quasigroup obtained from $(X, B)$ via the directed standard construction.    

    The right quasigroup $(X, *)$ is recursively $n$-differentiable if and only if all of its recursive derivatives $(X, *_k)$ are quasigroups for  $k = 0, 1, \ldots, n$. Equivalently, this holds precisely when each ($X, *_k$) is simultaneously a left and a right quasigroup for $k = 0, 1, \ldots, n$. 

    By Theorem \ref{c}, $(X, *_0)$ is a left quasigroup, since the $(v, k, 1)$-MD in the theorem's assumption is $(n+2)$-perfect, ensuring that for any two distinct points $a, b \in X$ there exists a unique block in $B$ in which they are $2$-apart. Consequently, $(X, *)$ is quasigroup.

    Similarly, under the assumptions of the theorem, for each $t = 2, \dots, n+2$ and any two distinct points $a, b \in X$, there exists a unique block in $B$ in which they are $t$-apart. It then follows from Theorem \ref{c} that $(X, *_{t-2})$ is a left quasigroup and $(X, *_{t-1})$ is a right quasigroup for all $t = 2, \dots, n+2$. Therefore, $(X, *)$ is recursively $n$-differentiable. 
\end{proof}

Let $(X, *)$ be the groupoid obtained with the directed standard construction from a $(v, k, 1)$-PMD. Then, by Theorem \ref{c} $(X, *_{k - 2})$ is a right quasigroup. Nevertheless, it cannot be a left quasigroup and therefore fails to be a quasigroup.

\begin{lemma}\label{aa}
    Let $(X, *)$ denote the right quasigroup obtained with the directed standard construction from a $(v, k, 1)$-MD with point set $X$ and block set $B$. Then, for every $a, b \in X$, we have $a *_{k - 2} b = a$ and consequently, $(X, *_{k - 2})$ is not a left quasigroup.
\end{lemma}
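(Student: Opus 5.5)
We split into the cases $a=b$ and $a\neq b$, and in each case simply track the cycle containing the pair $a,b$. The last step deduces failure of the left quasigroup property.

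\emph{Case $a=b$.} The directed standard construction gives $a*a=a$. By induction on $d$, every derivative satisfies $a *_d a = a$. Indeed, $a*_{-2}a=a*_{-1}a=a$, and if $a*_{d-2}a=a*_{d-1}a=a$, then $a*_d a=(a*_{d-2}a)*(a*_{d-1}a)=a*a=a$. In particular $a*_{k-2}a=a$.

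\emph{Case $a\neq b$.} Let $C$ be the unique block in which $a,b$ are consecutive. As in the proof of Theorem \ref{c} (via Theorem \ref{b}), it can be written as
\[
C=(a,\,b,\,a*b,\,b*(a*b),\ldots)=(a*_{-2}b,\;a*_{-1}b,\;a*_0 b,\;a*_1 b,\ldots).
\]
The element $a*_d b$ occupies position $d+3$ of this sequence, which we index from $1$ with $a$ in position $1$. Because $C$ has exactly $k$ entries and the sequence is periodic with period $k$ (Theorem \ref{b} identifies $C$ with the cycle of the periodic sequence), the element in position $k+1$ is again $a$. Setting $d+3=k+1$ gives $d=k-2$, so $a*_{k-2}b=a$.

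The only point that needs care is the indexing. The identification $C=(a,b,\ldots,a*_d b,\ldots)$ with exactly $d$ entries between $b$ and $a*_d b$ was stated in the proof of Theorem \ref{c} only for $0<d<k$. It is also valid for $d=k-2$, which lies in this range since $k\ge 3$. The periodicity argument then shows that after the $k$ entries of the block the sequence returns to $a$. I would double-check this off-by-one count on the $(3,3,1)$-PMD example. There $k=3$, so $a*_1 b=b*(a*b)$. In the block $(0,1,2)$ we have $0*1=2$ and $1*2=0$, giving $0*_1 1=0$, which agrees.

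\emph{Failure of the left quasigroup property.} Since $a*_{k-2}b=a$ for all $b$, the equation $a*_{k-2}x=c$ has no solution when $c\neq a$ and $|X|\ge 2$. It has $|X|>1$ solutions when $c=a$. Either way uniqueness fails, so $(X,*_{k-2})$ is not a left quasigroup. Here $v\ge k\ge 3$, so $|X|\ge 2$ holds.
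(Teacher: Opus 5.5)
Your proposal is correct and follows essentially the same route as the paper: write the unique block through the consecutive pair $a,b$ as the orbit $(a, b, a*_0 b, \ldots, a*_{k-3} b)$, read off $a*_{k-2} b = a$ by periodicity, and note that $a*_{k-2}x=c$ then has no solution for $c\neq a$. Your separate treatment of $a=b$ via idempotence and your explicit index check ($a*_d b$ in position $d+3$) fill in details that the paper's proof leaves implicit.
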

\begin{proof}
    This assertion follows from the observation that if a pair of distinct points $a, b \in X$ appears in a block $C$ in $B$ of length $k$, then by Theorem \ref{b} the block can be written as $$C =(a, b, a *_0 b, \ldots, a *_{k - 3} b),$$ which immediately implies that $a *
    _{k - 2} b = a$. Furthermore, under this operation, the equation $a *_{k - 2} x = b$ admits no solution when $a \ne b$.
\end{proof}

This lemma enables us to obtain an upper bound on the degree of recursive differentiability of quasigroups obtained from Mendelsohn designs.

\begin{lemma}\label{vanya}
    Suppose the directed standard construction yields a recursively $n$-differentiable quasigroup from a $(v, k, 1)$-MD with $k \ge 3$. Then $n \le k - 3$.
\end{lemma}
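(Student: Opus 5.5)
We argue by contradiction, using Lemma \ref{aa} directly. Suppose the directed standard construction applied to a $(v,k,1)$-MD $(X,B)$ with $k\ge 3$ yields a recursively $n$-differentiable quasigroup $(X,*)$ with $n \ge k-2$. By the definition of recursive $n$-differentiability, $(X,*_j)$ is a quasigroup for every $0\le j\le n$. Since $k\ge 3$, the index $j=k-2$ satisfies $0\le k-2\le n$. Hence $(X,*_{k-2})$ would be a quasigroup, in particular a left quasigroup.

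Lemma \ref{aa} gives $a *_{k-2} b = a$ for all $a,b\in X$. Consequently, for fixed $a$ and any $b\neq a$, the equation $a *_{k-2} x = b$ has no solution, so $(X,*_{k-2})$ is not a left quasigroup. This needs some $b\neq a$, i.e.\ $v\ge 2$. That holds whenever $B$ is nonempty, since any block has $k\ge3$ distinct points. This contradiction forces $n\le k-3$.

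The only step needing care is that Lemma \ref{aa} applies to all pairs. For distinct $a,b$ it was proved via the block $(a,b,a*_0 b,\ldots,a*_{k-3}b)$. For $a=b$, idempotence gives $a*_j a=a$ for all $j$, which is consistent with the lemma but is not needed: non-solvability for a single pair $a\neq b$ already suffices. The degenerate case $v\le 1$, with no blocks, should be excluded or noted as trivial, since a $(v,k,1)$-MD with $k\ge 3$ implicitly requires $v\ge k$. I expect no real obstacle here; the statement is essentially an immediate corollary of Lemma \ref{aa}.
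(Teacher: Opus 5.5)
Your argument is correct and is the same as the paper's, which simply observes that by Lemma~\ref{aa} the identity $a *_{k-2} b = a$ prevents $(X,*_{k-2})$ from being a left quasigroup, so recursive $n$-differentiability with $n \ge k-2$ is impossible. Your extra care about needing some $b \neq a$ is a minor precision the paper omits. It matters: for $v=1$ the empty block set is vacuously a $(1,k,1)$-MD, and the one-point quasigroup is recursively $n$-differentiable for every $n$, so that case must be explicitly excluded rather than merely noted as trivial.
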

\begin{proof}
   This follows from the fact that, by Lemma \ref{aa}, the $(k-2)$-th recursive derivative is not a left quasigroup.

\end{proof}

Moreover, the bound from the lemma above is attained on perfect Mendelsohn designs.

\begin{theorem}\label{e}
    The directed standard construction applied to a $(v, k, 1)$-PMD with $k \ge 3$ yields a recursively $(k-3)$-differentiable quasigroup, whereas $(k-3)$ is its maximal degree of recursive differentiability.
\end{theorem}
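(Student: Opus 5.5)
The statement combines two earlier results, so the plan is to derive each half separately.

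For the existence part, I will observe that a $(v,k,1)$-PMD is by definition $(k-1)$-perfect. For every $t=1,\dots,k-1$, each ordered pair of distinct points is therefore $t$-apart in exactly one block. Setting $n=k-3$, the design is $(n+2)$-perfect. The constraint $0\le n\le k-3$ of Theorem \ref{d} holds exactly when $k\ge 3$, which is our hypothesis. Theorem \ref{d} then gives that the directed standard construction yields a recursively $(k-3)$-differentiable quasigroup. Since a PMD is in particular a $(v,k,1)$-MD, Lemma \ref{a} and Theorem \ref{b} apply without change. The argument is just the chain Theorem \ref{c} $\Rightarrow$ Theorem \ref{d} specialized to $t=2,\dots,k-1$.

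For maximality, I will invoke Lemma \ref{aa}, which says $a *_{k-2} b = a$ for all $a,b$. Hence $(X,*_{k-2})$ is not a left quasigroup: for $a\ne b$ the equation $a*_{k-2}x=b$ has no solution, and $v\ge 2$ guarantees such a pair exists. So the groupoid is not recursively $(k-2)$-differentiable. This is exactly the content of Lemma \ref{vanya}, which gives $n\le k-3$ for any degree $n$ achieved. Combined with the first part, $k-3$ is the exact maximal degree.

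The only place needing care is the degenerate boundary $k=3$ (so $n=0$). There the claim is that the construction gives a quasigroup whose first derivative $*_1$ fails to be one. Theorem \ref{c} with $t=2$ makes $*_0$ left-invertible, and Lemma \ref{a} makes it right-invertible, so $*_0$ is a quasigroup. For the failure of $*_1$, I will check that the block description in Lemma \ref{aa}, $(a,b,a*_0 b)$, remains valid for length-$3$ blocks. I expect this bookkeeping to be the main (mild) obstacle. Otherwise the proof is a direct assembly of Theorem \ref{d} and Lemma \ref{vanya}.
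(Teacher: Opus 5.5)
Your proposal is correct and follows the paper's own proof, which obtains the existence part from Theorem~\ref{d} with $n=k-3$ (a PMD is $(k-1)$-perfect) and the maximality from Lemma~\ref{vanya}, itself a consequence of Lemma~\ref{aa}. Your extra checks of the $k=3$ boundary and of the implicit assumption $v\ge 2$ are sound; they only make explicit details that the paper's argument leaves tacit.
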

\begin{proof}
It follows from Theorem \ref{d} and Lemma \ref{vanya}.
\end{proof}

\begin{theorem}\label{z}
    The existence of a $(v, k, 1)$-PMD guarantees that $v^r(2, q) \ge k$.
\end{theorem}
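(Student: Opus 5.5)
The plan is to combine Theorem \ref{e} with Theorem \ref{eq}. Note that the statement leaves $q$ implicit: it should be read with $q = v$, the number of points of the design, since the quasigroup lives on the point set $X$.

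First, I take a $(v,k,1)$-PMD $(X,B)$ and apply the directed standard construction to obtain the groupoid $(X,*)$.

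\begin{itemize}
\item If $k \ge 3$, Theorem \ref{e} shows that $(X,*)$ is a recursively $(k-3)$-differentiable quasigroup of order $v$.
\item Next I form the full $2$-recursive code of length $n = k$ defined by $f = *$. This is the set of words $(u_1,\ldots,u_k)$ with $u_1,u_2 \in X$ arbitrary and $u_l = u_{l-2} * u_{l-1}$ for $l \ge 3$.
\item Since $k \ge 3$ and $(X,*)$ is recursively $(k-3)$-differentiable, Theorem \ref{eq} (with $n = k$) says this code is MDS.
\item The code is therefore a full recursive MDS code of length $k$ and dimension $2$ over an alphabet of size $v$. By the definition of $v^r$ as a maximum, $v^r(2,v) \ge k$.
\end{itemize}

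The only point needing care is the degenerate range of $k$. Theorem \ref{e}, and hence the argument above, requires $k \ge 3$. For $k \le 2$ the inequality should hold trivially.

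\begin{itemize}
\item A $(v,k,1)$-PMD with $k=2$ exists for every $v$: take all ordered pairs as blocks.
\item Any quasigroup of order $v$ gives a full recursive MDS code of length $3$. Indeed, by Theorem \ref{eq} with $n=3$ only $0$-differentiability is needed, and the Cayley table of $\mathbb{Z}_v$ supplies such a quasigroup. Hence $v^r(2,v) \ge 3 > 2$.
\end{itemize}

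So the substantive case is $k \ge 3$, and there it is just the chaining of the two earlier theorems.

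I expect no real obstacle. The main step to watch is matching indices: Theorem \ref{eq} with code length $k$ demands exactly recursive $(k-3)$-differentiability, which is precisely the degree that Theorem \ref{e} guarantees.
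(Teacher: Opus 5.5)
Your proposal is correct and is essentially the paper's argument, which simply chains Theorem \ref{e} with Theorem \ref{eq} (reading $q=v$ and taking code length $n=k\ge 3$). Your extra discussion of $k\le 2$ is harmless but unnecessary. One small slip there: a $(v,2,1)$-PMD does not use ``all ordered pairs as blocks'', since the cyclically ordered block $(a,b)$ already makes both $(a,b)$ and $(b,a)$ consecutive, so it needs one block per unordered pair.
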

\begin{proof}
    It follows from Theorem \ref{e} and Theorem \ref{eq}.
\end{proof}

Furthermore, we are able to give an explicit characterization of the recursively $n$-differentiable quasigroups obtained via the directed standard construction from perfect cyclic Mendelsohn designs.

\begin{theorem}
    A recursively $n$-differentiable quasigroup $(X, *)$ arises from the directed standard construction applied to a $(v, n+3, 1)$-PMD if and only if it is idempotent and satisfies, for all distinct $a, b \in X$ and all $0 \le d < n+1$, the following conditions:
    
    \begin{enumerate}
        \item $a *_{n + 1} b = a$,
        \item $a *_{n + 2} b = b$,
        \item $a *_d b \ne a$.
    \end{enumerate}
\end{theorem}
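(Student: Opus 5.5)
We prove the two implications separately, working throughout with the cycles of the sequences $a, b, a*b, b*(a*b), \ldots$, that is, with $Cycle(X,*)$. Set $k = n+3$.

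\emph{Necessity.} Suppose $(X,*)$ comes from a $(v,k,1)$-PMD $(X,B)$ by the directed standard construction. Idempotence holds by the definition of the construction.

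Take distinct $a,b$. By Theorem \ref{b} the unique block containing $a,b$ consecutively is
$$C = (a, b, a*_0 b, a*_1 b, \ldots, a*_{k-3} b).$$
Since $|C| = k$, the sequence returns to $a$ and then $b$, so $a*_{k-2}b = a$ and $a*_{k-1}b = b$. With $k-2 = n+1$ and $k-1 = n+2$, these are conditions 1 and 2 (condition 1 is Lemma \ref{aa}).

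For $0 \le d < n+1$, the element $a*_d b$ occupies position $d+3 \le k$ of $C$. The elements of a block are distinct, so $a*_d b \ne a$, which is condition 3.

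\emph{Sufficiency.} Let $(X,*)$ be an idempotent recursively $n$-differentiable quasigroup satisfying conditions 1--3. Define $B$ to be the set of cycles in $Cycle(X,*)$ of length greater than one. We show that $(X,B)$ is a $(v,k,1)$-PMD and that $*$ is its directed standard construction.

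\textbf{Step 1: length and distinctness.} Fix distinct $a,b$. By conditions 1 and 2, the sequence starting at $a,b$ returns to the pair $a,b$ after exactly $k$ steps, so its cycle has period dividing $k$. We must show the $k$ elements
$$a,\; b,\; a*_0 b,\; \ldots,\; a*_{n} b$$
are pairwise distinct. Then the period is exactly $k$ and the cycle is a genuine cyclically ordered set.

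Condition 3 shows that $a$ differs from all later entries. For any other pair, use the fact that every consecutive pair in the cycle generates the same cycle, just shifted. Suppose $x_i = x_j$ with $i<j$, and all earlier entries are distinct. Then $x_i \ne x_{i+1}$, and we restart the sequence at the pair $(x_i, x_{i+1})$. This restarted sequence is the same cycle beginning at $x_i$, and condition 3 applied to it forbids $x_i$ from recurring within the next $k-1$ positions. That contradicts $x_i = x_j$. Consecutive entries are distinct because, if $x_i = x_{i+1}$, idempotence would make the sequence constant from there on, and running it backwards (possible since $*$ is a right quasigroup) would force $a=b$.

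This step is the main obstacle. The delicate point is making the shift argument airtight, in particular checking that the required range of $d$ in condition 3 covers every gap from $1$ to $k-1$.

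\textbf{Step 2: Mendelsohn property.} Each ordered pair of distinct elements appears consecutively in exactly one cycle, since the pair determines its cycle. So $B$ is a $(v,k,1)$-MD. Its directed standard construction recovers $*$: on distinct pairs this holds by construction, and on equal pairs it holds by idempotence.

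\textbf{Step 3: perfection.} For $t = 2, \ldots, k-1$, we need each ordered pair $(a,c)$ of distinct elements to be $t$-apart in exactly one block. Write $x$ for the element following $a$. Then $c$ is $t$-apart from $a$ in the block through $a, x$ exactly when $a *_{t-2} x = c$. Since $0 \le t-2 \le n+1$, we argue by cases:

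for $t-2 \le n$, $*_{t-2}$ is a quasigroup, so there is a unique solution $x$;

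for $t-2 = n+1$, condition 1 would give $a*_{n+1}x = a$, which conflicts with $c \ne a$.

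This last case shows the claim is only needed, and only valid, up to $t = k-2 = n+1$. Counting pairs handles the remaining $t = k-1$: the pair $(a,c)$ is $(k-1)$-apart exactly when $(c,a)$ is $1$-apart, and that occurs in exactly one block. Uniqueness of $x$ in the quasigroup cases, together with $x \ne a$ (since $a*_{t-2}a = a \ne c$), gives exactly one block. Hence $(X,B)$ is a $(v,n+3,1)$-PMD and $*$ arises from it by the directed standard construction.
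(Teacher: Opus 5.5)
Your proof is correct and follows the same route as the paper: you take $B$ to be the nontrivial cycles of $Cycle(X,*)$, use conditions 1--3 to get length exactly $n+3$ with distinct entries, and use unique solvability in $*_0,\dots,*_n$ for perfection; your Step 1 supplies the shift-by-a-consecutive-pair argument and the idempotence/right-cancellation argument that the paper only asserts. One harmless slip in Step 3 is that for $2\le t\le k-1$ you have $t-2\le n$, so the quasigroup argument already covers $t=k-1$, the case $t-2=n+1$ (i.e.\ $t=k$) never arises, and your counting argument for $t=k-1$ is correct but redundant.
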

\begin{proof}

A quasigroup obtained with the directed standard construction applied to a $(v, n+3, 1)$-PMD is, by definition and as ensured by Theorem \ref{e}, seen to possess all the properties required in the statement of the theorem.

    Conversely, assume that the conditions of the theorem hold. Consider the cyclic decomposition $Cycle(X, *)$, and focus on those cycles containing a consecutive pair of distinct elements $a, b$, namely of the form $$(a, b, a *_0 b, a *_1 b, \ldots).$$ Since $a *_d b \ne a$ for every $d < n + 1$, no such cycle can have length smaller than $n + 3$. On the contrary, the relations $a *_{n + 1} b = a$ and $a *_{n + 2} b = b$ force the length to be exactly $n + 3$. The condition $a *_d b \ne a$ for $d < n + 1$ further guarantees that no element is repeated within the cycle. Hence, once the trivial $1$-cycles are removed, the decomposition $Cycle(X, *)$ coincides with a set $B$ of blocks of a cyclic Mendelsohn design, and $(X, *)$ is obtained from $(X, B)$ with the directed standard construction. Moreover, the recursive $n$-differentiability of $(X, *)$ implies that for every $k = 0, 1, \ldots, n$ the equations $a *_k x = b$ and  $y *_k a = b$ admit unique solutions. This uniqueness ensures the existence of a single cycle in which $a$ and $b$ are separated by exactly $k + 1$ elements, as well as a unique cycle in which they are separated by precisely $k$ elements. It follows that the block set $B$ is precisely the collection of blocks of a perfect cyclic Mendelsohn design

\end{proof}

The theorem \ref{e}, along with its corollaries, allows for the immediate application of results from Mendelsohn design theory to the construction of recursively differentiable quasigroups and recursive MDS codes. While our primary focus lies in advancing Conjecture \ref{hyp}, it is equally important to refine the existing bounds on the maximal known degree of recursive differentiability.

We provide a series of key results in the theory of Mendelsohn combinatorial designs, as established in \cite{bennett1985conjugate, bennett1994incomplete, bennett1990perfect, mendelsohn1969combinatorial, zhang1990, abel1998direct, bennett1997existence, bennett1996packings, bennett1992constructions, bennett1992existence, bennett1998perfect, abel2000perfect, miao1995perfect, abel1998existence, abel2002existence}.

\begin{theorem}\label{L3}\cite[Theorem 5.2]{BennettMD} 
For all integers $v \ge 4$ with $v \equiv 0,1\bmod{4}$, there exists $(v, 4, 1)$-PMD, except for $v \in \{4,8\}$.
\end{theorem}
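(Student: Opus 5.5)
This statement is a cited result from design theory (Bennett's survey, Theorem 5.2), not something proved from the earlier material of this paper. So the plan is to reconstruct the standard existence argument for $(v,4,1)$-PMDs.

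\textbf{Step 1: necessity and the exceptions.} A $(v,4,1)$-PMD has $v(v-1)/4$ blocks, so $v(v-1)\equiv 0 \bmod 4$, which gives $v\equiv 0,1 \bmod 4$. The exception $v=4$ I will settle by a short counting argument. There are three blocks. Through a fixed point $a$ run exactly the blocks containing $a$, and each contributes one $2$-apart pair $(a,\cdot)$. So all three blocks contain every point, and the three blocks are cyclic orders of $\{0,1,2,3\}$. In each block the $2$-apart pairs form two opposite unordered pairs, each occurring in both orientations. Hence the three blocks must realize the three perfect matchings of $K_4$. The consecutive pairs must then cover the $12$ ordered pairs exactly once. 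A cyclic order of four points that has a given matching as its diagonals is determined up to reversal, and checking the $2^3$ choices of orientations shows that none covers all $12$ arcs. The case $v=8$ is a finite check, which I would delegate to an exhaustive computer search with symmetry breaking (for instance, fixing the blocks through point $0$). This is the known result of Bennett.

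\textbf{Step 2: direct constructions for small orders.} For prime powers $v=q\equiv 1 \bmod 4$, I take $X=\mathbb{F}_q$. Blocks are developed from base blocks of the form $(0,\,x,\,x+y,\,y)$ under the group $\{z\mapsto \alpha z+\beta\}$, with $\alpha$ ranging over the fourth powers or a suitable subgroup. Writing $\omega$ for a primitive element, the choice $(0,1,1+\omega^{(q-1)/4}\cdots)$ is the cyclotomic device. Perfectness at distances $1$, $2$ and $3$ reduces to requiring that certain differences lie in distinct cosets of the index-$4$ (or index-$2$) multiplicative subgroup. A Weil-type character-sum estimate shows such an element exists for all sufficiently large $q$, and the remaining small $q$ are checked by hand or computer. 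For other small $v$, I would use cyclic or $1$-rotational difference families over $\mathbb{Z}_v$ or $\mathbb{Z}_{v-1}\cup\{\infty\}$. Here the condition is that the differences at each apart-distance $1$, $2$, $3$ each cover the nonzero group elements exactly once.

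\textbf{Step 3: recursion.} I use PBD closure. The set of $v$ admitting a $(v,4,1)$-PMD is PBD-closed, by the standard argument: given a PBD$(v,K)$ and a $(k,4,1)$-PMD on each block, take their union. This works because perfectness is a pairwise condition, and each pair lies in a unique PBD block. It remains to show $\{v\equiv 0,1 \bmod 4\}\setminus\{4,8\}$ is generated. For $v\equiv 1 \bmod 4$, I use known PBD$(v,\{5,9,13,17,\dots\})$ results. For $v\equiv 0 \bmod 4$, the PMD-closure alone cannot give the case, since $4$ and $8$ are missing. So I would also use incomplete designs and holey perfect Mendelsohn designs (HPMDs) of type $4^n$ or $g^n$, together with Wilson's fundamental construction with weights from transversal designs $TD(5,m)$. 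Filling the holes with PMDs of order $g$ or $g+1$ then finishes the argument.

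\textbf{Main obstacle.} The hard part is the $v\equiv 0 \bmod 4$ class. There, filling holes requires ingredient designs avoiding $4$ and $8$. This leaves a finite but sizeable list of small orders, such as $12$ and a few values below about $100$, each needing an ad hoc direct construction, usually found by computer over $\mathbb{Z}_{v-1}\cup\{\infty\}$ or $\mathbb{Z}_{v-4}\cup H$. I would settle the recursion first, then list exactly the residual orders and attack them by computer search for difference families.
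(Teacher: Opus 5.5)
The paper gives no proof of this statement. It quotes Bennett's survey, so treating it as an imported result, as you do at the outset, is exactly what the paper does. Your necessity argument and your treatment of $v=4$ are correct and complete: every point lies in all three blocks, their $2$-apart pairs must be the three perfect matchings of $K_4$, and each of the eight orientation choices repeats an arc.

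Beyond that, the proposal is a roadmap rather than a proof, and what it defers is precisely the content of the cited theorem:
\begin{itemize}
\item the nonexistence for $v=8$;
\item direct designs for the non-prime-power members of whatever PBD basis you use for $v\equiv 1\pmod 4$;
\item the whole class $v\equiv 0\pmod 4$, $v\ge 12$.
\end{itemize}
For the last class, Step~3 only lists techniques (HPMDs, Wilson's fundamental construction, hole filling). It exhibits no ingredient design and never determines which orders are reached. Some orders must be built directly: for example, $12$ and $16$ admit no nontrivial PBD with block sizes in $\{v\equiv 0,1\pmod 4\}\setminus\{4,8\}$. Your remark that PBD-closure ``cannot'' give this class is also too strong. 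The set $\{v\equiv 0,1\pmod 4\}\setminus\{4,8\}$ is itself PBD-closed, so by Wilson's theory it has a finite basis, and finitely many direct constructions would suffice in principle. What is missing is that basis and the designs for its members.

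Step~2 also fails as formulated. If $q\equiv 1\pmod 8$, then $-1=\omega^{(q-1)/2}$ is a fourth power. The four $2$-apart differences of any $4$-cycle $(a_0,a_1,a_2,a_3)$ are $\pm(a_2-a_0)$ and $\pm(a_3-a_1)$, and for your parallelogram the $1$-apart differences are $\pm x,\pm y$. So they meet at most two cosets of the subgroup of fourth powers. Your ``distinct cosets'' condition is therefore unsatisfiable, and no character-sum estimate can supply the element. Moreover, $z\mapsto -z+x+y$ rotates the parallelogram by two places. Whenever $-1$ lies in the multiplier group (always, for the index-$2$ subgroup), the orbit has at most half as many blocks as the group has elements, and the coset conditions must be recomputed.

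The repair needs no estimates. Develop $(0,1,1+i,i)$, with $i=\omega^{(q-1)/4}$, under the full affine group $z\mapsto \alpha z+\beta$, $\alpha\in\mathbb{F}_q^{*}$.
\begin{itemize}
\item The map $z\mapsto iz+1$ rotates this block by one position.
\item The affine group is sharply $2$-transitive, so an element fixing the block acts on it as a rotation and is the identity when that rotation is trivial. Hence the stabilizer is cyclic of order $4$, and the orbit has $q(q-1)/4$ blocks.
\item Transitivity on ordered pairs, together with the count $4\cdot q(q-1)/4=q(q-1)$, shows that every ordered pair is $t$-apart in exactly one block for each $t=1,2,3$.
\end{itemize}
This gives a $(q,4,1)$-PMD for every prime power $q\equiv 1\pmod 4$.
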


\begin{theorem}\label{L22}\cite[Theorem 5.3]{BennettMD} 
For all integers $v \ge 5$ with $v \equiv 0,1 \bmod{5}$, there exists $(v, 5, 1)$-PMD, except for $v \in \{6, 10, 15, 20\}$.
\end{theorem}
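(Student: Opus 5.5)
The statement is cited from \cite[Theorem 5.3]{BennettMD}, so within this paper no new argument is needed; the honest plan is to invoke the existing literature on perfect Mendelsohn designs. If I had to reconstruct the result, I would use the standard design-theoretic approach: direct constructions for small orders together with recursive constructions via pairwise balanced designs (PBDs).

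First I would check the arithmetic. A $(v,5,1)$-PMD has $v(v-1)/5$ blocks, so $v\equiv 0,1 \pmod 5$ is necessary. Next I would build direct constructions for a base set of orders, cyclically over $\mathbb{Z}_v$ or $\mathbb{Z}_{v-1}\cup\{\infty\}$. For $v\equiv 1 \pmod 5$, one looks for base blocks $(0,d_1,d_1+d_2,\dots)$ whose $t$-apart differences, for $t=1,2$, each cover $\mathbb{Z}_v^*$ exactly once. For $t=3,4$ the condition follows automatically, since $t$-apart in a cyclic $5$-block is $(5-t)$-apart reversed. So only \emph{two} difference conditions are needed. Finite fields $GF(q)$ with $q\equiv 1 \pmod 5$ give such blocks by taking cyclotomic orbits of a suitable $5$-tuple $(1,\omega,\omega^2,\omega^3,\omega^4)$ twisted by a parameter.

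The recursive step uses PBD closure. The set of $v$ for which a $(v,5,1)$-PMD exists is PBD-closed: given a PBD on $v$ points whose block sizes all admit PMDs, placing a PMD on each block yields a PMD on $v$ points. I would also use Wilson-type constructions with group divisible designs: an inflation by a $5$-PMD-compatible transversal design and filling groups by holey PMDs, to handle the remaining residues. Then, by Wilson's asymptotic theory, only finitely many $v$ remain open, and each is settled by computer-found direct examples.

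I expect the main obstacle to be the finite list of small exceptional orders. One must show nonexistence for $v\in\{6,10\}$, by exhaustive search or a counting argument for $v=6$, and handle $15$ and $20$, which are either genuinely absent or, as stated, excluded. Every other small value must be covered by explicit constructions. I would first try cyclic and $1$-rotational searches for each open $v$ before resorting to incomplete PMDs.
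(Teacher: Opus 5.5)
Your proposal is correct and matches the paper, which gives no proof of its own and simply imports this result from Bennett's survey \cite[Theorem 5.3]{BennettMD}. Your reconstruction sketch fairly outlines how that literature proceeds: cyclic or finite-field base blocks needing only the $t=1,2$ difference conditions, PBD-closure with holey fillings, and direct constructions for small orders. One small correction is that the statement only asserts existence for $v\notin\{6,10,15,20\}$ (in the source, $15$ and $20$ are merely possible exceptions), so no nonexistence argument for $v=6,10$ is required.
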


\begin{theorem}\label{L5}\cite[Theorems 5.4 and 5.5]{BennettMD}
For all integers $v \ge 6$ with $v \equiv 0, 1, 3, 4 \bmod{6}$ except for the following cases:
\begin{itemize}
    \item  $v \equiv 0 \bmod{6}$ and  $v \in \{6, 12,18,24,30,48,54,60,72,84,90,96,102,108,114, $ $132,138,150,$ $162,168,180,192,198\}$,
    \item$v \equiv 3 \bmod{6}$, and either $v \in \{207, 213, 219, 237, 243, 255, 297, 375, 411,$ $ 435, 453, 459,$ $ 471, 489, 495, 513, 519, 609, 615, 621, 657\}$, or $v \in [9, 135] \cup [153, 183]$,
    \item $v \equiv 4 \bmod{6}$, and either $v \in \{10, 16,22,34\}$, or $v \in [52, 148]$
\end{itemize}

$(v, 6 , 1)$-PMDs are known to exist.
\end{theorem}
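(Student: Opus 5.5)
This statement is a cited existence result from design theory, \cite[Theorems 5.4 and 5.5]{BennettMD}, and not something derived from the earlier machinery of this paper. The plan is therefore to reconstruct the standard route by which such existence spectra are proved, rather than to find a new argument. Two ingredients drive it. The first is the necessary condition $\lambda v(v-1)\equiv 0 \pmod{k}$. With $k=6$ and $\lambda=1$ this reads $v(v-1)\equiv 0\pmod 6$, which is exactly $v\equiv 0,1,3,4 \pmod 6$. So the statement is the assertion that this condition is sufficient outside an explicit finite-plus-interval list of undecided orders. The second ingredient is a combination of direct constructions for small or structured orders with recursive constructions that propagate existence to all large orders.

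For the direct part I will work in the residue classes $v\equiv 1 \pmod 6$ and $v\equiv 0 \pmod 6$ first. When $v=6m+1$ is a prime power, I will take $X=\mathbb{F}_v$ and develop base blocks of the form $(x,x\theta,x\theta^2,\dots,x\theta^5)$ under the additive group, where $\theta$ is a primitive sixth root of unity. Perfectness amounts to checking that, for each $t=1,\dots,5$, the $t$-apart differences $x\theta^i(\theta^t-1)$ run over every nonzero element exactly once. This reduces to the elements $\theta^t-1$ lying in distinct cosets of the subgroup of sixth powers, a cyclotomic condition of the kind handled by Weil-type character-sum bounds for large $v$, with a finite computer check for the rest. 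For $v\equiv 0\pmod 6$ and the remaining small orders, I will use $1$-rotational or $\mathbb{Z}_{v-1}\cup\{\infty\}$ developments with base blocks found by computer search.

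For the recursive part I will use PBD closure and Wilson's fundamental construction. Perfect Mendelsohn designs with block size $6$ are closed under pairwise balanced designs whose block sizes all admit a $(\cdot,6,1)$-PMD. Inflating via transversal designs $TD(6,m)$ with holey perfect Mendelsohn designs (HPMDs) as ingredients yields all sufficiently large $v$ in each admissible class. Incomplete PMDs are used to fill holes and to handle the classes $v\equiv 3,4\pmod 6$, which cannot be reached directly from prime powers.

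The main obstacle I expect is the class $v\equiv 3\pmod 6$, together with the small orders in the other classes. The needed ingredient HPMDs of type $6^n$ or $3^n$ and the small PMDs are hard to find, and this is precisely why the statement retains long intervals such as $[9,135]$ and a scattered exception list. Here I would first try computer searches for cyclic base blocks over $\mathbb{Z}_{v}$ or $\mathbb{Z}_{v-1}\cup\{\infty\}$, and then feed whatever succeeds into the PBD-closure bookkeeping. The exact exception list would be inherited from \cite{BennettMD} rather than re-derived here.
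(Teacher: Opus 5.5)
Your proposal matches the paper at the level that matters. The paper gives no argument for this statement; it imports it verbatim from \cite[Theorems 5.4 and 5.5]{BennettMD}, which summarizes \cite{miao1995perfect, abel2000perfect}. Your final step, inheriting the exception list from that citation, is therefore the entire proof. Your outline of direct and recursive constructions could not replace the citation, since the intervals and scattered exceptions are exactly what the case-by-case constructions and PBD bookkeeping produce, and you do not carry those out.

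The one concrete mathematical step you do spell out is wrong, and you should fix it. Take base blocks $(x,x\theta,\dots,x\theta^5)$ with $x$ running over a transversal of $\langle\theta\rangle$ in $\mathbb{F}_v^*$. It must be a transversal, because $(x\theta,\dots,x\theta^5,x)$ is a rotation of the same block. For each fixed $t\in\{1,\dots,5\}$, the $t$-apart differences are $x\theta^i(\theta^t-1)$. The products $x\theta^i$ run over $\mathbb{F}_v^*$ exactly once, and $\theta^t-1\neq 0$, so multiplying by $\theta^t-1$ is a bijection and every nonzero difference occurs exactly once. Hence a $(v,6,1)$-PMD exists for every prime power $v\equiv 1\pmod 6$ with no further hypothesis; no character sums or computer checks are needed.

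The requirements for different $t$ are independent, so a condition comparing the elements $\theta^t-1$ across different $t$ is beside the point. Worse, the condition you state fails infinitely often. From $\theta^2-\theta+1=0$ we get $\theta-1=\theta^2$ and $\theta^5-1=\theta^4$. Their ratio $\theta^2=\omega^{(v-1)/3}$, with $\omega$ primitive, is a sixth power exactly when $v\equiv 1\pmod{18}$. So for $v=19,37,73,\dots$ your criterion fails even though the design exists, and the proposed Weil-bound step could never establish it.
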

\begin{theorem}\label{L6}\cite[Theorem 5.7]{BennettMD}
For all integers $v \geq 7$ with $v \equiv 0, 1 \bmod{7}$, except for $v \in \{14,15,21,22,28,35,36,$ $42,70,84,98,99,126,140,$ $141,147,148,154,182,$ $183,196,238,$ $245,273,294\}$, there exists a $(v, 7, 1)$-PMD.
\end{theorem}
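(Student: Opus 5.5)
This statement is a cited existence result from the literature on perfect Mendelsohn designs (\cite[Theorem 5.7]{BennettMD}), so the paper takes it as given. If I were to reconstruct a proof, I would follow the standard two-stage strategy for such existence theorems: direct constructions for a base set of orders, followed by recursive constructions that propagate existence to all admissible $v$. The necessary condition $v(v-1) \equiv 0 \pmod 7$ gives $v \equiv 0,1 \bmod 7$, and the plan is to show sufficiency outside the stated finite exceptional list.

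\textbf{Direct constructions.}
1. For prime powers $q \equiv 1 \pmod 7$, I would build a $(q,7,1)$-PMD over $\mathbb{F}_q$. The base blocks would be $(x, x+\omega^i, x+\omega^{2i}\ldots)$-type cyclic blocks $(0,\alpha,\alpha^2,\ldots)$ built from a primitive $7$th root of unity $\theta$, namely blocks of the form $(a, a+b, a+b\theta', \ldots)$.
2. More simply, I would use the orbit of $(0,1,1+\theta,\ldots,1+\theta+\cdots+\theta^5)$ under affine maps $x \mapsto \lambda x + \mu$, with $\lambda$ ranging over coset representatives of the $7$th powers.
3. The $t$-apart differences are then $\theta^{j}(1+\theta+\cdots+\theta^{t-1})$. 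Perfection reduces to a condition that $1+\theta+\cdots+\theta^{t-1}$, for $t=1,\ldots,6$, lie in distinct cosets of the index-$7$ subgroup, checked via cyclotomy (Weil-type bounds for large $q$, computer search for small $q$).
4. For $v \equiv 0 \pmod 7$ and small orders, I would use cyclic difference constructions on $\mathbb{Z}_{v-1}\cup\{\infty\}$ or $\mathbb{Z}_v$, found by computer.

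\textbf{Recursive constructions.}
1. Wilson-type fundamental construction: take a group divisible design (e.g. from a transversal design $TD(8,m)$ with truncated groups) and inflate each block using a $7$-perfect Mendelsohn frame of type $g^{n}$.
2. Fill the groups with PMDs, possibly using an extra point, i.e. incomplete PMDs.
3. Product constructions: a $(v_1,7,1)$-PMD and a $(v_2,7,1)$-PMD give a $(v_1v_2,7,1)$-PMD via orthogonal-array/perfect-frame inflation.
4. I would then cover each residue class with PBD-closure arguments: the set of $v$ admitting a PMD is PBD-closed, so it suffices to exhibit $(v,K,1)$-PBDs with all block sizes in the set of already-realized orders.

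\textbf{Main obstacle.} The hard part is the small and medium orders. Recursion needs ingredient frames and PBDs that do not exist for small parameters, so each $v$ below a few hundred must be handled individually by computer-found direct constructions or ad hoc incomplete designs. The listed exceptions such as $14,15,21,\ldots,294$ are exactly the orders where this case analysis is not yet completed, rather than orders where nonexistence is proven. In practice this step is the bulk of the cited work, and here I would simply invoke \cite{BennettMD} and the papers cited there.
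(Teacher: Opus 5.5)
The paper gives no proof of this statement. It is quoted verbatim from \cite[Theorem 5.7]{BennettMD}, and your closing move of invoking that survey and the work behind it is exactly how the paper treats it. On the step that carries the weight, you agree with the paper. The reconstruction you sketch around it, however, contains errors you should fix before presenting it even as a roadmap.

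\textbf{The finite-field construction.} Letting $\lambda$ range over coset representatives of the $7$th powers gives at most $7q$ blocks. A $(q,7,1)$-PMD needs $q(q-1)/7$ blocks, and these agree only for $q=50$, which is not a prime power. The multiplier $\lambda$ must instead run over a transversal $T$ of the subgroup $H=\langle\theta\rangle$ of $7$th roots of unity. The $7$th powers themselves form such a transversal exactly when $49\nmid q-1$.

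With that choice no cyclotomic condition is needed. Write $s_t=(\theta^t-1)/(\theta-1)$. The $t$-apart differences of $\lambda B+\mu$ are $\lambda\theta^j s_t$. Since $s_t\neq 0$ for $1\le t\le 6$, the set $\{\lambda\theta^j s_t:\lambda\in T,\ 0\le j\le 6\}$ is all of $\mathbb{F}_q^*$, each element once, for every $t$ separately. So every prime power $q\equiv 1\pmod 7$ yields a $(q,7,1)$-PMD.

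Your proposed criterion, that the $s_t$ for different $t$ lie in distinct cosets of the index-$7$ subgroup, is not what perfection requires. Perfection is a condition for each fixed $t$. The Weil-bound and computer check is therefore addressing the wrong question.

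\textbf{The product construction.} It is not unconditional. You inflate each block $(x_0,\dots,x_6)$ of the first PMD to blocks $((x_0,u_0),\dots,(x_6,u_6))$ and fill each group $\{x\}\times X_2$ with the second PMD. This requires the rows $(u_0,\dots,u_6)$ to form a TD$(7,v_2)$, i.e.\ five MOLS of order $v_2$, or symmetrically a TD$(7,v_1)$. Such transversal designs are not available for several small orders.

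Neither issue affects the conclusion, since you ultimately rest on the citation just as the paper does.
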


Taking into account the previously established possibility of constructing recursively differentiable quasigroups from perfect Mendelsohn cyclic designs, these results lead us to the following theorem. 

\begin{theorem}\label{hu} The following lower bounds on the maximum length of recursive MDS-codes hold:
\begin{itemize}
    \item For all $q\equiv 0,1\bmod{4}$ with $q \ge 4$, except $q\in \{4,8\}$, one has $v^r(2,q)\geq 4$.
    \item For all $q\equiv 0,1\bmod{5}$ with $q \ge 5$, except $q\in \{6, 10, 15, 20\}$, one has $ v^r(2,q) \geq 5$.
    \item For all $q \equiv 0, 1, 3, 4 \bmod{6}$ with $q \ge 6$, except for

\begin{itemize}
    \item  $q \equiv 0 \bmod{6}$ and  $q \in \{6, 12,18,24,30,48,54,60,$ $72,$ $84,$ $90,$ $96,$ $102,$ $108,$ $114,$ $132,138,$ $150,162,168,180,192,198\}$,
    \item$q \equiv 3 \bmod{6}$ and either $q \in \{207,213,$ $219,$ $237,$ $243,$ $255,$ $297,375,411,435,$ $453,459,$ $471,489,495,513,519,609,615,621,657\}$, or $q \in [9, 135] \cup [153, 183]$,
    \item $q \equiv 4 \bmod{6}$ and either $q \in \{10, 16,22,34\}$, or $q \in [52, 148]$,
\end{itemize}
we have  $ v^r(2,q) \geq 6$.

    \item For all  $q\equiv 0, 1 \bmod{7}$, $q \geq 7$, except for 
$q \in \{14,15,21,22,28,$ $35,$ $36,$ $42,$ $70,84,98,99,126,$ $140,141,147,148,154,182,183,196,238,$ $245,$ $273,294\}$, we have $ v^r(2,q) \geq 7$.

\end{itemize}
\end{theorem}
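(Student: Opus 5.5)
The plan is to derive Theorem \ref{hu} directly from Theorem \ref{z}, applying it once for each design size. Theorem \ref{z} says that a $(v, k, 1)$-PMD gives $v^r(2, v) \ge k$. So for each $k \in \{4,5,6,7\}$ I would set $q = v$ and read the corresponding bullet of Theorem \ref{hu} off the matching existence result.

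\begin{itemize}
\item For $k=4$, I would invoke Theorem \ref{L3}.
\item For $k=5$, I would invoke Theorem \ref{L22}.
\item For $k=6$, I would invoke Theorem \ref{L5}.
\item For $k=7$, I would invoke Theorem \ref{L6}.
\end{itemize}

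In each case the congruence conditions and the lists of exceptions in Theorem \ref{hu} are copied verbatim from the cited existence theorem. The proof therefore reduces to checking that the two parameter sets coincide.

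The only genuine content lies in Theorem \ref{z}, which the paper has already established. I would still recall its chain to make clear that nothing further is needed.
\begin{enumerate}
\item Start from a $(v,k,1)$-PMD, which by definition is $(k-1)$-perfect, hence $((k-3)+2)$-perfect.
\item Apply Theorem \ref{d} with $n = k-3$; the required range $0 \le n \le k-3$ holds because $k \ge 3$. This gives a recursively $(k-3)$-differentiable quasigroup of order $v$ via the directed standard construction.
\item Apply Theorem \ref{eq}: the full $2$-recursive code of length $k$ defined by this quasigroup is MDS, since $k - 3 = n$.
\item Conclude $v^r(2, v) \ge k$.
\end{enumerate}

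No real obstacle is expected. The only point needing care is bookkeeping: the exceptional sets of Theorem \ref{hu} must match those of Theorems \ref{L3}--\ref{L6} exactly, including the interval exceptions $[9,135]\cup[153,183]$ and $[52,148]$ in the $k=6$ case. One might also observe that, since $v^r(2,q)$ is monotone in the sense that a longer recursive MDS code yields shorter ones by truncation, larger $k$ also improve the smaller bounds. The statement as written does not require this, so I would not use it.
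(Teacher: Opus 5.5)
Your proposal is correct and matches the paper's proof, which simply cites Theorem~\ref{z} together with the existence results of Theorems~\ref{L3}, \ref{L22}, \ref{L5} and \ref{L6}, carrying the congruence classes and exception lists over unchanged. Your unpacking of Theorem~\ref{z} via Theorem~\ref{d} with $n=k-3$ and Theorem~\ref{eq} is exactly what Theorem~\ref{e} supplies in the paper's chain, so nothing further is needed.
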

\begin{proof}
It follows from Theorems \ref{z}, \ref{L3}, \ref{L22}, \ref{L5}, \ref{L6}.
\end{proof}

In \cite{MarkovAdd} and \cite{syrbu2022recursively}, tables were presented containing the best known lower bounds for $v^r(2,q)$, and the maximal known degrees of recursive differentiability of quasigroups of order $q$ for $q \le 100$ and $q \le 200$, respectively. It should be noted that these quantities are related by Theorem \ref{eq}: if a quasigroup of order $q$ with degree recursive differentiability $r(q)$ is known, then $v^r(2, q) \ge r(q) + 3$, and conversely. For this reason, the aforementioned tables coincide for the same values of $q$, up to an additive shift by $3$. In what follows, we provide a table in the form of the maximal known degrees of recursive differentiability of quasigroups of order $q$, obtained by applying Theorem \ref{hu} together with known results from \cite{MarkovBase, MarkovAdd, Markov42}.

In the table, the order $q$ is computed as the sum of the indices of the row and the column, with the pair $(0, 0)$ corresponding to $q = 100$. Each cell contains either the previously known bound (if it remains unchanged) or, in the case of an improvement achieved in this work, a pair consisting of the new bound followed by the earlier one.

\begin{table}[h!]
\centering
\begin{tabular}{|l|l|l|l|l|l|l|l|l|l|l|}
\hline
 & 0 & 1 & 2 & 3 & 4 & 5 & 6 & 7 & 8 & 9\\
\hline
0 (100) & 2 & $\infty$ & 0 & 1 & 2 & 3 & 0 & 5 & 6 & 7\\
\hline
10 & 1 & 9 & 1 & 11 & 0 & 1 & 14 & 15 & 0 & 17\\
\hline
20 & 2 & 2 & 1 & 21 & 2 & 23 & \textbf{2}/0 & 25 & \textbf{3}/2 & 17\\
\hline
30 & \textbf{2}/1 & 29 & 30 & 1 & 1 & 3 & \textbf{3}/1 & 35 & 1 & 2\\
\hline
40 & \textbf{3}/1 & 39 & \textbf{3}/1 & 41 & 2 & \textbf{2}/1 & \textbf{3}/1 & 45 & 1 & 47\\
\hline
50 & 4 & \textbf{2}/1 & 2 & 51 & 3 & 3 & 5 & 4 & 4 & 57\\
\hline
60 & 2 & 59 & 3 & 5 & 62 & 4 & 3 & 65 & 3 & 3\\
\hline
70 & 4 & 69 & 6 & 71 & 3 & 3 & 3 & 5 & 4 & 77\\
\hline
80 & 5 & 79 & 3 & 81 & 4 & 4 & 4 & 3 & 6 & 87\\
\hline
90 & 3 & 5 & 4 & 3 & 4 & 4 & 4 & 95 & 4 & 7\\
\hline
\end{tabular}
\caption{New bounds on the minimum known degree of recursive differentiability for quasigroups of order at most $100$.}
\end{table}

\section{Construction of a Recursively Differentiable Quasigroup of Order 26}

As demonstrated by Theorem \ref{hu} and by the table at the end of the preceding section, we have provided an affirmative answer to Conjecture \ref{hyp} for $q = 26$. Furthermore, we have established not merely the existence of a recursively differentiable quasigroup of order $26$, but in fact the existence of a recursively $2$-differentiable quasigroup of order $26$.

We now present its explicit construction, relying on the results of \cite{bennett1997existence}.

Let us consider the set $X=\{0, 1, \ldots, 20, \infty_1, \ldots, \infty_5\}$. Define a unary operation $\circ$ on $X$ by setting $\circ(i)=i+1 \bmod (21)$ for $i=0,..., 20$, $\circ(\infty_j)=\infty_j$ for $j=1,..., 5$.
Next, consider the collection of blocks $B$ obtained by applying $\circ$ elementwise to the initial set of blocks $$(0,1,14,20,19),(\infty_1,0,9,16,6),(\infty_2,0,10,8,12),$$$$(\infty_3,0,16,10,1),(\infty_4,0,17,4,9),(\infty_5,0,18,11,14),$$ as illustrated in Figure 1.

To this collection of blocks we adjoin $\{(\infty_1, \infty_2, \infty_3, \infty_4, \infty_5),$ $(\infty_1, \infty_3, \infty_5, \infty_2, \infty_4)$, $(\infty_1, \infty_4, \infty_2, \infty_5, \infty_3),$ $(\infty_1, \infty_5, \infty_4, \infty_3, \infty_2)\}$. It can be verified directly that $X$ as the set of points, together with the resulting collection of blocks, forms a $(26, 5, 1)$-PMD. Applying the directed standard construction to this design, we obtain by Theorem \ref{e} a recursively $2$-differentiable quasigroup of order $26$.
\\ \\ \\

\begin{figure}[!h]
    \centering
\setlength{\unitlength}{5cm}
\begin{picture}(2.1,1.9)(-1.05,-1.05)

 \linethickness{3pt}
 \put(0,1){\circle*{0.02}}
 \put(0.04,1){\!$\infty_1$}

 \put(-0.95,0.31){\circle*{0.02}}
\put(-1.08,0.31){$\infty_2$}

 \put(-0.59,-0,81){\circle*{0.02}}
 \put(-0.72,-0.81){$\infty_3$}

 \put(0.59,-0,81){\circle*{0.02}}
 \put(0.62,-0.81){$\infty_4$}

 \put(0.95,0.31){\circle*{0.02}}
 \put(0.98,0.31){$\infty_5$}

 \Line (0,1)(-0.95,0.31)
 \Line(-0.95,0.31)(-0.59,-0,81)
 \Line(-0.59,-0,81)(0.59,-0,81)
 \Line(0.59,-0,81)(0.95,0.31)
 \Line(0.95,0.31)(0,1)

 \put(0,-1){\arc[54,126]{1.62}}
 \put(0.95,-0.31){\arc[126,198]{1.62}}
 \put(0.59,0.81){\arc[198,270]{1.62}}
 \put(-0.59,0.81){\arc[270,342]{1.62}}
 \put(-0.95,-0.31){\arc[342,414]{1.62}}
 \linethickness{2pt}
 \polygon(0,1)(0,0.5)( -.22, -.45)( .50, .037)(-.49, -.11)
 \linethickness{1pt}
 \polygon(0,0.5)( -.15, .48)( .43, -.25)( .15, .48)(.28, .41)
 \put( 0, .5){\circle{0.02}\,{\scriptsize 0}}
 \put( -.15, .48){\circle{0.02}\,{\scriptsize 1}}
 \put( -.28, .41){\circle{0.02}\,{\scriptsize 2}}
 \put( -.39, .31){\circle{0.02}\,{\scriptsize 3}}
 \put( -.47, .18){\circle{0.02}\,{\scriptsize 4}}
 \put( -.50, .037){\circle{0.02}\,{\scriptsize 5}}
 \put( -.49, -.11){\circle{0.02}\,{\scriptsize 6}}
 \put( -.43, -.25){\circle{0.02}\,{\scriptsize 7}}
 \put( -.34, -.37){\circle{0.02}\,{\scriptsize 8}}
 \put( -.22, -.45){\circle{0.02}\,{\scriptsize 9}}
 \put( -.075, -.49){\circle{0.02}\,{\scriptsize 10}}
 \put( .075, -.49){\circle{0.02}\,{\scriptsize 11}}
 \put( .22, -.45){\circle{0.02}\,{\scriptsize 12}}
 \put( .34, -.37){\circle{0.02}\,{\scriptsize 13}}
 \put( .43, -.25){\circle{0.02}\,{\scriptsize 14}}
 \put( .49, -.11){\circle{0.02}\,{\scriptsize 15}}
 \put( .50, .037){\circle{0.02}\,{\scriptsize 16}}
 \put( .47, .18){\circle{0.02}\,{\scriptsize 17}}
 \put( .39, .31){\circle{0.02}\,{\scriptsize 18}}
 \put( .28, .41){\circle{0.02}\,{\scriptsize 19}}
 \put( .15, .48){\circle{0.02}\,{\scriptsize 20}}
\end{picture}\caption{Block construction scheme.}
\end{figure}
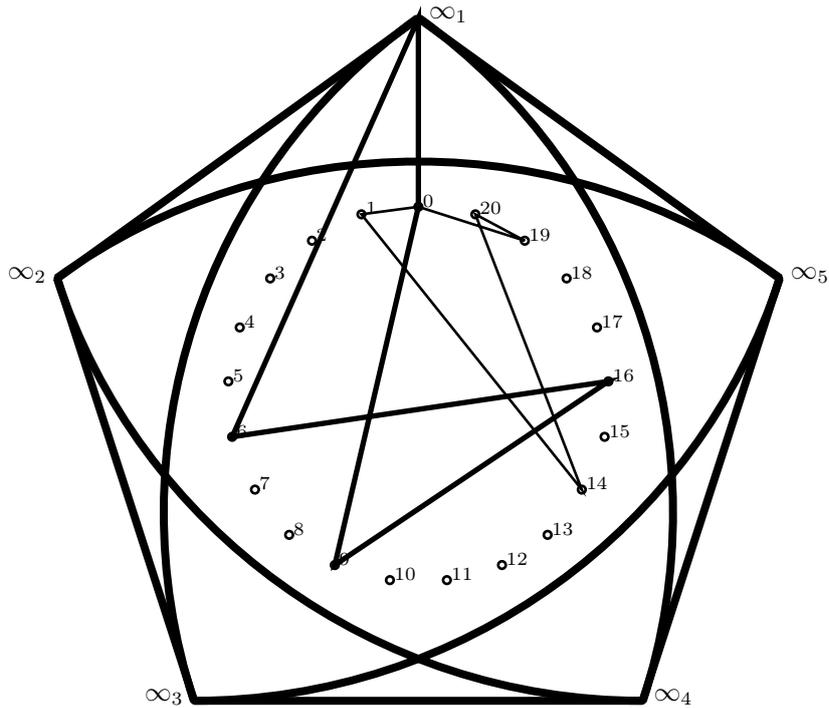

\section{A Recursively Differentiable Quasigroup of Order 26}

We now give an Cayley table of a recursively $2$-differentiable quasigroup of order $26$ obtained by the construction above, together with the Cayley tables of its recursive derivatives (which are themselves quasigroups). The alphabet is indexed by the digits $0$--$9$ and the letters $a$--$p$.
\newpage

\vspace*{1cm}
A recursively $2$-differentiable quasigroup of order $26$:
\vspace*{1cm}

\setlength{\tabcolsep}{2.5pt}

\begin{figure}[!h]
\centering
\begin{tabular}{cccccccccccccccccccccccccc}
0&e&3&p&m&o&5&i&d&g&8&l&n&j&h&6&a&4&b&2&1&f&9&k&c&7\\
2&1&f&4&p&m&o&6&j&e&h&9&l&n&k&i&7&b&5&c&3&g&a&0&d&8\\
4&3&2&g&5&p&m&o&7&k&f&i&a&l&n&0&j&8&c&6&d&h&b&1&e&9\\
e&5&4&3&h&6&p&m&o&8&0&g&j&b&l&n&1&k&9&d&7&i&c&2&f&a\\
8&f&6&5&4&i&7&p&m&o&9&1&h&k&c&l&n&2&0&a&e&j&d&3&g&b\\
f&9&g&7&6&5&j&8&p&m&o&a&2&i&0&d&l&n&3&1&b&k&e&4&h&c\\
c&g&a&h&8&7&6&k&9&p&m&o&b&3&j&1&e&l&n&4&2&0&f&5&i&d\\
3&d&h&b&i&9&8&7&0&a&p&m&o&c&4&k&2&f&l&n&5&1&g&6&j&e\\
6&4&e&i&c&j&a&9&8&1&b&p&m&o&d&5&0&3&g&l&n&2&h&7&k&f\\
n&7&5&f&j&d&k&b&a&9&2&c&p&m&o&e&6&1&4&h&l&3&i&8&0&g\\
l&n&8&6&g&k&e&0&c&b&a&3&d&p&m&o&f&7&2&5&i&4&j&9&1&h\\
j&l&n&9&7&h&0&f&1&d&c&b&4&e&p&m&o&g&8&3&6&5&k&a&2&i\\
7&k&l&n&a&8&i&1&g&2&e&d&c&5&f&p&m&o&h&9&4&6&0&b&3&j\\
5&8&0&l&n&b&9&j&2&h&3&f&e&d&6&g&p&m&o&i&a&7&1&c&4&k\\
b&6&9&1&l&n&c&a&k&3&i&4&g&f&e&7&h&p&m&o&j&8&2&d&5&0\\
k&c&7&a&2&l&n&d&b&0&4&j&5&h&g&f&8&i&p&m&o&9&3&e&6&1\\
o&0&d&8&b&3&l&n&e&c&1&5&k&6&i&h&g&9&j&p&m&a&4&f&7&2\\
m&o&1&e&9&c&4&l&n&f&d&2&6&0&7&j&i&h&a&k&p&b&5&g&8&3\\
p&m&o&2&f&a&d&5&l&n&g&e&3&7&1&8&k&j&i&b&0&c&6&h&9&4\\
1&p&m&o&3&g&b&e&6&l&n&h&f&4&8&2&9&0&k&j&c&d&7&i&a&5\\
d&2&p&m&o&4&h&c&f&7&l&n&i&g&5&9&3&a&1&0&k&e&8&j&b&6\\
9&a&b&c&d&e&f&g&h&i&j&k&0&1&2&3&4&5&6&7&8&l&n&p&m&o\\
a&b&c&d&e&f&g&h&i&j&k&0&1&2&3&4&5&6&7&8&9&p&m&o&l&n\\
g&h&i&j&k&0&1&2&3&4&5&6&7&8&9&a&b&c&d&e&f&o&l&n&p&m\\
h&i&j&k&0&1&2&3&4&5&6&7&8&9&a&b&c&d&e&f&g&n&p&m&o&l\\
i&j&k&0&1&2&3&4&5&6&7&8&9&a&b&c&d&e&f&g&h&m&o&l&n&p\\
\end{tabular}
\end{figure}
\newpage
\vspace*{1cm}

A recursive derivative of the aforementioned quasigroup:

\vspace*{1cm}

\begin{figure}[!h]

\centering
\begin{tabular}{cccccccccccccccccccccccccc}
0&k&g&a&d&h&7&l&o&6&c&5&b&i&p&n&1&9&e&m&2&3&j&f&8&4\\
3&1&0&h&b&e&i&8&l&o&7&d&6&c&j&p&n&2&a&f&m&4&k&g&9&5\\
m&4&2&1&i&c&f&j&9&l&o&8&e&7&d&k&p&n&3&b&g&5&0&h&a&6\\
h&m&5&3&2&j&d&g&k&a&l&o&9&f&8&e&0&p&n&4&c&6&1&i&b&7\\
d&i&m&6&4&3&k&e&h&0&b&l&o&a&g&9&f&1&p&n&5&7&2&j&c&8\\
6&e&j&m&7&5&4&0&f&i&1&c&l&o&b&h&a&g&2&p&n&8&3&k&d&9\\
n&7&f&k&m&8&6&5&1&g&j&2&d&l&o&c&i&b&h&3&p&9&4&0&e&a\\
p&n&8&g&0&m&9&7&6&2&h&k&3&e&l&o&d&j&c&i&4&a&5&1&f&b\\
5&p&n&9&h&1&m&a&8&7&3&i&0&4&f&l&o&e&k&d&j&b&6&2&g&c\\
k&6&p&n&a&i&2&m&b&9&8&4&j&1&5&g&l&o&f&0&e&c&7&3&h&d\\
f&0&7&p&n&b&j&3&m&c&a&9&5&k&2&6&h&l&o&g&1&d&8&4&i&e\\
2&g&1&8&p&n&c&k&4&m&d&b&a&6&0&3&7&i&l&o&h&e&9&5&j&f\\
i&3&h&2&9&p&n&d&0&5&m&e&c&b&7&1&4&8&j&l&o&f&a&6&k&g\\
o&j&4&i&3&a&p&n&e&1&6&m&f&d&c&8&2&5&9&k&l&g&b&7&0&h\\
l&o&k&5&j&4&b&p&n&f&2&7&m&g&e&d&9&3&6&a&0&h&c&8&1&i\\
1&l&o&0&6&k&5&c&p&n&g&3&8&m&h&f&e&a&4&7&b&i&d&9&2&j\\
c&2&l&o&1&7&0&6&d&p&n&h&4&9&m&i&g&f&b&5&8&j&e&a&3&k\\
9&d&3&l&o&2&8&1&7&e&p&n&i&5&a&m&j&h&g&c&6&k&f&b&4&0\\
7&a&e&4&l&o&3&9&2&8&f&p&n&j&6&b&m&k&i&h&d&0&g&c&5&1\\
e&8&b&f&5&l&o&4&a&3&9&g&p&n&k&7&c&m&0&j&i&1&h&d&6&2\\
j&f&9&c&g&6&l&o&5&b&4&a&h&p&n&0&8&d&m&1&k&2&i&e&7&3\\
g&h&i&j&k&0&1&2&3&4&5&6&7&8&9&a&b&c&d&e&f&l&o&m&p&n\\
8&9&a&b&c&d&e&f&g&h&i&j&k&0&1&2&3&4&5&6&7&o&m&p&n&l\\
a&b&c&d&e&f&g&h&i&j&k&0&1&2&3&4&5&6&7&8&9&m&p&n&l&o\\
4&5&6&7&8&9&a&b&c&d&e&f&g&h&i&j&k&0&1&2&3&p&n&l&o&m\\
b&c&d&e&f&g&h&i&j&k&0&1&2&3&4&5&6&7&8&9&a&n&l&o&m&p\\
\end{tabular}
\end{figure}
\newpage

\vspace*{1cm}

A second-order recursive derivative of the aforementioned quasigroup:
\vspace*{1cm}

\begin{figure}[!h]

\centering
\begin{tabular}{cccccccccccccccccccccccccc}
0&j&1&7&2&d&8&c&4&l&m&e&6&k&3&5&n&o&p&b&f&a&h&9&g&i\\
g&1&k&2&8&3&e&9&d&5&l&m&f&7&0&4&6&n&o&p&c&b&i&a&h&j\\
d&h&2&0&3&9&4&f&a&e&6&l&m&g&8&1&5&7&n&o&p&c&j&b&i&k\\
p&e&i&3&1&4&a&5&g&b&f&7&l&m&h&9&2&6&8&n&o&d&k&c&j&0\\
o&p&f&j&4&2&5&b&6&h&c&g&8&l&m&i&a&3&7&9&n&e&0&d&k&1\\
n&o&p&g&k&5&3&6&c&7&i&d&h&9&l&m&j&b&4&8&a&f&1&e&0&2\\
b&n&o&p&h&0&6&4&7&d&8&j&e&i&a&l&m&k&c&5&9&g&2&f&1&3\\
a&c&n&o&p&i&1&7&5&8&e&9&k&f&j&b&l&m&0&d&6&h&3&g&2&4\\
7&b&d&n&o&p&j&2&8&6&9&f&a&0&g&k&c&l&m&1&e&i&4&h&3&5\\
f&8&c&e&n&o&p&k&3&9&7&a&g&b&1&h&0&d&l&m&2&j&5&i&4&6\\
3&g&9&d&f&n&o&p&0&4&a&8&b&h&c&2&i&1&e&l&m&k&6&j&5&7\\
m&4&h&a&e&g&n&o&p&1&5&b&9&c&i&d&3&j&2&f&l&0&7&k&6&8\\
l&m&5&i&b&f&h&n&o&p&2&6&c&a&d&j&e&4&k&3&g&1&8&0&7&9\\
h&l&m&6&j&c&g&i&n&o&p&3&7&d&b&e&k&f&5&0&4&2&9&1&8&a\\
5&i&l&m&7&k&d&h&j&n&o&p&4&8&e&c&f&0&g&6&1&3&a&2&9&b\\
2&6&j&l&m&8&0&e&i&k&n&o&p&5&9&f&d&g&1&h&7&4&b&3&a&c\\
8&3&7&k&l&m&9&1&f&j&0&n&o&p&6&a&g&e&h&2&i&5&c&4&b&d\\
j&9&4&8&0&l&m&a&2&g&k&1&n&o&p&7&b&h&f&i&3&6&d&5&c&e\\
4&k&a&5&9&1&l&m&b&3&h&0&2&n&o&p&8&c&i&g&j&7&e&6&d&f\\
k&5&0&b&6&a&2&l&m&c&4&i&1&3&n&o&p&9&d&j&h&8&f&7&e&g\\
i&0&6&1&c&7&b&3&l&m&d&5&j&2&4&n&o&p&a&e&k&9&g&8&f&h\\
6&7&8&9&a&b&c&d&e&f&g&h&i&j&k&0&1&2&3&4&5&l&p&o&n&m\\
c&d&e&f&g&h&i&j&k&0&1&2&3&4&5&6&7&8&9&a&b&n&m&l&p&o\\
1&2&3&4&5&6&7&8&9&a&b&c&d&e&f&g&h&i&j&k&0&p&o&n&m&l\\
9&a&b&c&d&e&f&g&h&i&j&k&0&1&2&3&4&5&6&7&8&m&l&p&o&n\\
e&f&g&h&i&j&k&0&1&2&3&4&5&6&7&8&9&a&b&c&d&o&n&m&l&p\\
\end{tabular}
\end{figure}

\newpage

\section{Conclusion}
In the present work, the Couselo-González-Markov-Nechaev hypothesis is proved for the case $q = 26$, i.e., a full recursive $[4,2,3]_q$-code, or equivalently a recursively differentiable quasigroup of order $q$, is constructed. To this end, methods for constructing recursive MDS codes and recursively differentiable quasigroups by means of perfect cyclic Mendelsohn designs were introduced. 

The two remaining cases of the hypothesis, $q = 14$ and $q = 18$, can potentially be solved 
by adapting the methods used in the theory of perfect cyclic Mendelsohn designs, 
because recursively differentiable quasigroups admit a wider class of cyclic decompositions 
than just Mendelsohn designs, and applying similar methods to these decompositions may yield 
an even broader class of constructions than those obtained in this paper. 
At the same time, these cases may also be resolved by other classes of methods, 
in particular those used previously.

\textbf{Acknowledgments:} The author is sincerely grateful to Viktor Timofeyevich Markov for introducing him to the problem at the time, as well as his attentive engagement with the results that inspired the present paper. The main results of this paper were obtained in 2015 and remained unpublished for a long time. We also note that Figure 1 was drawn by Victor Timofeyevich Markov as a visualization of a construction sent to him by the author. He also expresses his profound gratitude to Aleksei Yakovlevich Kanel-Belov for helpful discussions on the matter.
\bibliography{refs.bib}

@article{MarkovBase,
url = {https://doi.org/10.1515/dma.1998.8.3.217},
title = {{Recursive MDS-codes and recursive differentiable quasigroups}},
author = {{E. Couselo, S. Gonzales, V. Markov, A. Nechaev}},
pages = {217--246},
volume = {8},
number = {3},
journal = {Discrete Math. Appl.},
doi = {10.1515/dma.1998.8.3.217},
year = {1998},
lastchecked = {2023-03-28}
}

@book{jmakv,
  title={Theory of error correction codes},
  author={{J. Mc-Williams, N. Sloan}},
  year={1979},
  publisher={Per. from English M .: Communication}
}

@book{tarry1900probleme,
  title={Le probl{\`e}me des 36 officiers},
  author={G. Tarry},
  year={1900},
  publisher={Secr{\'e}tariat de l'Association fran{\c{c}}aise pour l'avancement des sciences}
}

@article{Bose_Shrikhande_Parker_1960,
title={{Further Results on the Construction of Mutually Orthogonal Latin Squares and the Falsity of Euler’s Conjecture}},
volume={12},
DOI={10.4153/CJM-1960-016-5}, journal={Canadian Journal of Mathematics},
author={{R. Bose, S. Shrikhande, E. Parker}},
year={1960},
pages={189–203}}

@article{Markov42,
url = {https://doi.org/10.1007/s10958-009-9694-6},
title = {Pseudogeometries with clusters and an example of a recursive $[4, 2, 3]_{42}$-code},
author = {{V. Markov, A. Nechaev, S. Skazhenik,  E. Tveritinov}},
pages = {563--571},
volume = {163},
journal = {J Math Sci},
doi = {10.1007/s10958-009-9694-6},
year = {2009},
lastchecked = {2023-03-28}
}

@article{MarkovAdd,
url = {https://doi.org/10.1515/dma.2000.10.5.433},
title = {{Parameters of recursive MDS-codes}},
author = {{E. Couselo, S. Gonzales, V. Markov, A. Nechaev}},
pages = {433--454},
volume = {10},
number = {5},
journal = {Discrete Math. Appl.},
doi = {10.1515/dma.2000.10.5.433},
year = {2000},
lastchecked = {2023-03-28}
}

@article{Abashin,
url = {https://doi.org/10.1515/dma.2000.10.3.319},
title = {{Linear recursive MDS codes of dimensions 2 and 3}},
author = {A. Abashin},
pages = {319--332},
volume = {10},
number = {3},
journal = {Discrete Mathematics and Applications},
doi = {10.1515/dma.2000.10.3.319},
year = {2000},
lastchecked = {2025-07-25}
}

@article{syrbu2022recursively,
  title={On recursively differentiable k-quasigroups},
  author={{P. Syrbu, E. Kuznetsova}},
  journal={Buletinul Academiei de {\c{S}}tiin{\c{t}}e a Republicii Moldova. Matematica},
  volume={99},
  number={2},
  pages={68--75},
  year={2022}
}

@article{syrbu2023recursive,
  title={On recursive 1-differentiability of the quasigroup prolongations},
  author={{P. Syrbu, E. Kuznetsova}},
  journal={Buletinul Academiei de {\c{S}}tiin{\c{t}}e a Republicii Moldova. Matematica},
  volume={102},
  number={2},
  pages={102--109},
  year={2023}
}

@article{MENDELSOHN197763,
title = {Perfect cyclic designs},
journal = {Discrete Mathematics},
volume = {20},
pages = {63-68},
year = {1977},
issn = {0012-365X},
doi = {https://doi.org/10.1016/0012-365X(77)90043-7},
url = {https://www.sciencedirect.com/science/article/pii/0012365X77900437},
author = {N. Mendelsohn}
}

@article{BennettMD,
title = {{Recent progress on the existence of perfect Mendelsohn designs}},
journal = {Journal of Statistical Planning and Inference},
volume = {94},
number = {2},
pages = {121-138},
year = {2001},
note = {Second Shanghai Conf. on Designs, Codes and Finite Geometries},
issn = {0378-3758},
doi = {https://doi.org/10.1016/S0378-3758(00)00245-7},
url = {https://www.sciencedirect.com/science/article/pii/S0378375800002457},
author = {F. Bennett},
}

@article{mendelsohn1971natural,
  title={{A natural generalization of Steiner triple systems}},
  author={Mendelsohn, N.},
  journal={Computers in number theory},
  pages={323--338},
  year={1971},
  publisher={Academic Press New York}
}

@article{lindner2003quasigroups,
  title={Quasigroups constructed from cycle systems},
  author={C. Lindner},
  journal={Quasigroups and related systems},
  volume={10},
  number={1},
  pages={29--64},
  year={2003}
}

@article{bennett1985conjugate,
  title={{Conjugate orthogonal Latin squares and Mendelsohn designs}},
  author={F. Bennett},
  journal={Ars Combin},
  volume={19},
  pages={51--62},
  year={1985}
}

@article{bennett1994incomplete,
  title={{Incomplete perfect Mendelsohn designs with block size 4 and holes of size 2 and 3}},
  author={{F. Bennett, H. Shen, J. Yin}},
  journal={Journal of Combinatorial Designs},
  volume={2},
  number={3},
  pages={171--183},
  year={1994},
  publisher={Wiley Online Library}
}

@article{bennett1990perfect,
  title={{Perfect Mendelsohn designs with block size 4}},
  author={{F. Bennett, X. Zhang, L. Zhu}},
  journal={Ars Combinatoria},
  volume={29},
  pages={65--72},
  year={1990},
  publisher={CHARLES BABBAGE RES CTR PO BOX 272 ST NORBERT POSTAL STATION, WINNIPEG MB~…}
}

@article{mendelsohn1969combinatorial,
  title={Combinatorial designs as models of universal algebras},
  author={N. Mendelsohn},
  journal={Recent Progress in Combinatorics, Academic Press, New York},
  volume={123132},
  year={1969}
}

@article{zhang1990,
  title={{On the existence of $(v, 4, 1)$-PMD}},
  author={X. Zhang},
  journal={Ars Combinatoria},
  volume={29},
  pages={3--12},
  year={1990}
}

@article{abel1998direct,
  title={{Direct constructions for certain types of HMOLS}},
  author={{R. Abel, H. Zhang}},
  journal={Discrete mathematics},
  volume={181},
  number={1-3},
  pages={1--17},
  year={1998},
  publisher={Elsevier}
}

@article{bennett1997existence,
  title={{Existence of HPMDs with block size five}},
  author={{F. Bennett, Y. Chang, J. Yin, H. Zhang}},
  journal={Journal of Combinatorial Designs},
  volume={5},
  number={4},
  pages={257--273},
  year={1997},
  publisher={Wiley Online Library}
}

@article{bennett1996packings,
  title={{Packings and coverings of the complete directed multigraph with 3-and 4-circuits}},
  author={{F. Bennett, J. Yin}},
  journal={Discrete Mathematics},
  volume={162},
  number={1-3},
  pages={23--29},
  year={1996},
  publisher={Elsevier}
}

@article{bennett1992constructions,
  title={{Constructions of perfect Mendelsohn designs}},
  author={{F. Bennett, K. Phelps, C. Rodger, L. Zhu}},
  journal={Discrete mathematics},
  volume={103},
  number={2},
  pages={139--151},
  year={1992},
  publisher={Elsevier}
}

@article{bennett1992existence,
  title={{Existence of perfect Mendelsohn designs with $k= 5$ and $\lambda> 1$}},
  author={{F. Bennett, K. Phelps, C. Rodger, J. Yin, L. Zhu}},
  journal={Discrete mathematics},
  volume={103},
  number={2},
  pages={129--137},
  year={1992},
  publisher={Elsevier}
}

@article{bennett1998perfect,
  title={{Perfect Mendelsohn packing designs with block size five}},
  author={{F. Bennett, J. Yin, H. Zhang, R. Abel}},
  journal={Designs, Codes and Cryptography},
  volume={14},
  number={1},
  pages={5--22},
  year={1998},
  publisher={Springer}
}

@article{abel2000perfect,
  title={{Perfect Mendelsohn designs with block size six}},
  author={{R. Abel, F. Bennett, H. Zhang}},
  journal={Journal of statistical planning and inference},
  volume={86},
  number={2},
  pages={287--319},
  year={2000},
  publisher={Elsevier}
}

@article{miao1995perfect,
  title={{Perfect Mendelsohn designs with block size six}},
  author={{Y. Miao, L. Zhu}},
  journal={Discrete mathematics},
  volume={143},
  number={1-3},
  pages={189--207},
  year={1995},
  publisher={Elsevier}
}

@article{abel1998existence,
  title={{The existence of perfect Mendelsohn designs with block size 7}},
  author={{R. Abel, F. Bennett}},
  journal={Discrete mathematics},
  volume={190},
  number={1-3},
  pages={1--14},
  year={1998},
  publisher={Elsevier}
}

@article{abel2002existence,
  title={{Existence of Steiner seven-cycle systems}},
  author={{R. Abel, F. Bennett, G. Ge, L.Zhu}},
  journal={Discrete mathematics},
  volume={252},
  number={1-3},
  pages={1--16},
  year={2002},
  publisher={Elsevier}
}

\bibliographystyle{unsrt}

\end{document}